\documentclass{article}
\usepackage{romp,amssymb,amsfonts,amsmath,dsfont,amsthm}

\usepackage{hyperref}

\newenvironment{remark}[1][Remark]{\begin{trivlist}
\item[\hskip \labelsep {\bfseries #1}]}{\end{trivlist}}
\newenvironment{warning}[1][Warning]{\begin{trivlist}
\item[\hskip \labelsep {\bfseries #1}]}{\end{trivlist}}
\newenvironment{aside}[1][Aside:]{\begin{trivlist}
\item[\hskip \labelsep {\bfseries #1}]}{\end{trivlist}}
\newenvironment{statement}[1][Statement:]{\begin{trivlist}
\item[\hskip \labelsep {\bfseries #1}]}{\end{trivlist}}

\newcommand{\Dleft}{[\hspace{-1.5pt}[}
\newcommand{\Dright}{]\hspace{-1.5pt}]}
\newcommand{\SN}[1]{\Dleft #1 \Dright}

\DeclareMathOperator{\Vect}{Vect}

\DeclareMathOperator{\w}{w}

\title{ From $L_{\infty}$-algebroids to higher Schouten/Poisson structures }
\author{ Andrew James Bruce \\ (e-mail: andrewjames.bruce@physics.org) \\[2ex]
                      }

\begin{document}

\maketitle
\begin{abstract}
 We show that  $L_{\infty}$-algebroids, understood in terms of $Q$-manifolds can be described in terms of certain higher Schouten and Poisson structures on graded (super)manifolds. This generalises known constructions for Lie (super) algebras and Lie algebroids.
\end{abstract}

\noindent
{\bf Keywords:} $L_{\infty}$-algebras, $L_{\infty}$-algebroids, higher Poisson structures, higher Schouten structures, graded manifolds.

\section{Introduction and main results}\label{introduction}
\noindent  Recall that Lie algebroids \cite{Pradines1967} were originally  defined  as the triple $(E, [\bullet,\bullet], a )$, here $E$ is a vector bundle over the manifold $M$ equipped with a Lie bracket acting on the module of sections $\Gamma(E)$,  together with a vector bundle morphism called the anchor $a: E \rightarrow TM$.  The anchor and the Lie bracket satisfy the following

\begin{equation}\label{liealgebroidequations}
[u,fv] = a(u)f v \pm f[u,v],\hspace{15pt}
a([u,v]) = [a(u), a(v)],
\end{equation}

\noindent for all $u,v \in \Gamma(E)$ and $f \in C^{\infty}(M)$. To paraphrase this definition, a Lie algebroid is a vector bundle with the structure of a Lie algebra on the module of sections that can be represented by vector fields.\\

 \noindent Equivalently, a vector bundle  $E \rightarrow M$  is a Lie algebroid if there exists a weight one homological vector field on the total space of $\Pi E$, considered as a graded manifold \cite{Vaintrob:1997}. A supermanifold equipped with a homological vector field often denoted $Q$, that is an odd vector field that Lie supercommutes with itself, is known as a Q-manifold.  Note that  from the start we will consider all objects to be $\mathds{Z}_{2}$-graded, we will refer to this grading as (Grassmann) parity. Here $\Pi$ is the parity reversion functor, it shifts the parity of the fibre coordinates. The weight is provided by the assignment of weight zero to the base coordinates and   weight one to the fibre coordinates.  Generically the weight is completely  independent of the parity.   The homological condition on the vector field  encapsulates all the properties of Lie algebroids, namely equations (\ref{liealgebroidequations}). \\

\noindent What is slightly less well-know  is that the algebroid structure on $E \rightarrow M$ is also equivalent to

\begin{enumerate}
\item A weight minus one Schouten\footnote{Schouten  structures are also known as odd Poisson or Gerstenhaber structures. We will stick to the nomenclature Schouten following \cite{Voronov:1995}. They are the Grassmann odd analogue of Poisson structures.} structure on the total space of  $\Pi E^{*}$.
\item A weight minus one Poisson structure on the total space of $E^{*}$.
\end{enumerate}

\noindent It is important to note that the description of Lie algebroids  as certain Schouten and Poisson structures is in terms of functions on graded supermanifolds, as opposed to sections of vector bundles. Note that the linearity of these brackets in ``conventional language"  is replaced by a condition on the weight.  Moreover, the associated  Schouten and Poisson brackets satisfy a Leibnitz rule over the product of functions. For the case of a Lie (super) algebra the associated  brackets are known as the Lie--Schouten and  Lie--Poisson bracket \cite{Voronov:1995}. \\

\noindent We address the natural  question  ``\emph{is there is a similar construction for $L_{\infty}$-algebroids?}"\\

\noindent We understand an $L_{\infty}$-algebroid to be the Q-manifold $(\Pi E,Q)$, for a given  vector bundle $E \rightarrow M$. The homological vector field can be inhomogenous in weight.  A  notion of strictness, thought of as a compatibility condition between the Q-structure and the vector bundle structure can be employed. The two extreme examples of $L_{\infty}$-algebroids are Lie algebroids and $L_{\infty}$-algebras. \\

\noindent One can also describe $L_{\infty}$-algebroids in terms of an $L_{\infty}$-algebra on the module of sections $\Gamma(E)$ such that the ``higher anchors" arise in the Leibnitz rule.  In this work we will take the description of $L_{\infty}$-algebroids  in terms of Q-manifolds as the starting point. This is conceptually clear and fundamental in the constructions presented in this paper. Furthermore, it allows for a clear definition of morphisms between $L_{\infty}$-algebroids as morphisms of  Q-manifolds.   \\

 \noindent Thinking of Poisson and Schouten structures as functions on particular symplectic supermanifolds allows for very natural higher generalisations as outlined by Voronov \cite{voronov-2004,voronov-2005}. These higher structures are precisely what are required in passing from Lie algebroids to  $L_{\infty}$-algebroids.\\

 \noindent We state the main theorem (Theorem(\ref{theorem 1})) of this paper as the canonical construction of total weight one higher Schouten or higher Poisson structures on the total space of $\Pi E^{*}$ or $E^{*}$ respectively, given an $L_{\infty}$-algebroid $(\Pi E, Q)$. That is we associate with the homological field $Q \in \Vect(\Pi E)$ an odd  function $S \in C^{\infty}(T^{*}(\Pi E^{*}))$ such that $\{S,S \}_{T^{*}(\Pi E^{*})}=0$  and an even function $P \in C^{\infty}(\Pi T^{*}(E^{*}))$ such that $\SN{P,P}_{\Pi T^{*}(E^{*})}=0$. The brackets here are  canonical Poisson and Schouten--Nijenhuits brackets respectively.  By employing a bi-grading it is shown  that these structures can be assigned a total weight of one. \\

 \noindent The higher Schouten and higher Poisson structures are thought of as a higher order  generalistion of  the ``classical binary" structures.  For example, a higher Poisson structure on a supermanifold is the replacement of a Poisson bi-vector with an even parity, but otherwise inhomogenous multivector field. Associated with a higher Schouten/Poisson structure is a homotopy Schouten/Poisson algebra on the smooth functions over the supermanifold.  That is there is an \emph{$L_{\infty}$-algebra structure, suitably ``superised" such that the series of brackets satisfy a Leibnitz rule over the supercommutative product of functions}. See Voronov \cite{voronov-2004,voronov-2005}  and  Voronov \& Khudaverdian \cite{khudaverdian-2008} (also see de Azc$\acute{\textnormal{a}}$rraga et.al \cite{deAzcarraga:1996zk,deAzcarraga:1996jk}).\\

\noindent For the specific case of $L_{\infty}$-algebroids, the algebras of ``vector bundle multivectors" $C^{\infty}(\Pi E^{*})$ and ``vector bundle symmetric contravariant tensors" $C^{\infty}(E^{*})$ come equipped with homotopy Schouten and homotopy Poisson algebras respectively. Furthermore, as a direct corollary of  Theorem(\ref{theorem 1}) we see that given an arbitrary $L_{\infty}$-algebra one can associate directly a certain homotopy Schouten or equivalently a certain homotopy Poisson algebra. That is the constructions presented in this paper give  homotopy versions of the Lie--Schouten and Lie--Poisson brackets.  \\

\noindent  $L_{\infty}$-algebroids appear quite directly when considering higher Poisson and higher Schouten structures over supermanifolds \cite{khudaverdian-2008}, as well as on Lie algebroids \cite{Bruce-2010}. In particular they arise when generalising  the Lie algebroid structure on  $T^{*}M$ for a given Poisson manifold to the higher/homotopy versions. In part the work presented here goes towards completing  the general framework found in the study of higher Poisson and Schouten structures.  \\

\noindent However, it must be noted that the notion of an $L_{\infty}$-algebroid employed here is not the most general  one could consider. More general graded manifolds and homological vector fields on them, that is  ``differential graded manifolds"  would represent a wider definition of an $L_{\infty}$-algebroid than employed here.  All the graded structures encountered here will have their origin in vector bundle and double vector bundle structures.  Examples of differential graded manifolds, can for example be found lying behind the BV-antifield formalism \cite{Batalin:1981jr,Batalin:1984jr} and the BFV formalism \cite{Fradkin1975,Batalin1977}. Indeed quantum field theory provides a rich source of graded structures and will continue to provide much inspiration to mathematicians.  \\

 \noindent Graded geometry has provided a powerful setting to discus various geometric constructions. Works along these lines  include \cite{Alexandrov:1995kv,Bruce-2010,roytenberg-1999,Roytenberg:2001,voronov-2004,Voronov:2001qf,Severa-2001} as well as many others. \\

 \noindent This section continues with  a brief outline of $L_{\infty}$-algebras and higher derived brackets as needed later.  Here we will fix some nomenclature, notation and conventions. In Section(\ref{Linftyalgebroids}) we recall some basic facts about graded manifolds and define $L_{\infty}$-algebroids.  In Section(\ref{main theorem}) we state and prove the main theorem of this paper, Theorem(\ref{theorem 1}). We also include a few explicit and simple examples to illustrate the theorem in Section(\ref{simpleexamples}).  In Section(\ref{concluding remarks}) we end with  few concluding remarks. A short appendix presenting some  lemmas on canonical double vector bundle morphisms is included. \\

\noindent \textbf{Preliminaries} \\
 \noindent All vector spaces and algebras will be $\mathds{Z}_{2}$-graded. The reason for this lies in physics, where it is necessary to employ such a grading when wanting to describe fermions and/or ghosts.  Generally  in accordance with  ``supermathematics"  we will omit the prefix \emph{super}. By \emph{manifold} we will mean a \emph{smooth supermanifold}. We denote the Grassmann parity of an object by \emph{tilde}: $\widetilde{A} \in \mathds{Z}_{2}$. By \emph{even} or \emph{odd} we will be referring explicitly to the Grassmann parity and not to any extra weight(s). \\

\noindent A \emph{Poisson} $(\varepsilon = 0)$  or \emph{Schouten} $(\varepsilon = 1)$ \emph{algebra} is understood as a vector space $A$ with a bilinear associative multiplication and a bilinear operation $\{ ,\}: A \otimes A \rightarrow A$ such that:
\begin{list}{}
\item \textbf{Grading} $\widetilde{\{a,b \}_{\varepsilon}} = \widetilde{a} + \widetilde{b} + \varepsilon$
\item \textbf{Skewsymmetry} $\{a,b\}_{\varepsilon} = -(-1)^{(\tilde{a}+ \varepsilon)(\tilde{b}+ \varepsilon)} \{b,a \}_{\varepsilon}$
\item \textbf{Leibnitz Rule} $\{a,bc \}_{\varepsilon} = \{a,b \}_{\varepsilon}c + (-1)^{(\tilde{a} + \varepsilon)\tilde{b}} b \{a,c \}_{\varepsilon}$
\item \textbf{Jacobi Identity} $\sum_{cyclic\: a,b,c} (-1)^{(\tilde{a}+ \varepsilon)(\tilde{c}+ \varepsilon)}\{a,\{b,c\}_{\varepsilon}  \}_{\varepsilon}= 0$
\end{list} \vspace{10pt}
\noindent for all homogenous elements $a,b,c \in A$.\\

 \noindent A manifold $M$ such that $C^{\infty}(M)$ is a Poisson/Schouten algebra is known as a \emph{Poisson/Schouten manifold}. As the Poisson/Schouten brackets are biderivations over the functions they are  specified by contravariant tensor fields of rank two. A \emph{Poisson structure} on a manifold $M$ is understood as a bi-vector field $P \in C^{\infty}(\Pi T^{*}M)$ (quadratic in fibre coordinates), such that $\SN{P,P} = 0$. Here the brackets are the canonical Schouten brackets on $\Pi T^{*}M$ also known a the Schouten--Nijenhuis bracket.  The associated Poisson bracket is given by $\{f,g \}_{P} = (-1)^{\widetilde{f}+1}\SN{\SN{P,f},g}$,  with $f,g \in C^{\infty}(M)$. Similarly, a \emph{Schouten structure} on a manifold $M$ is an odd symmetric tensor field $S \in C^{\infty}(T^{*}M)$ quadratic in the fibre coordinates such that $\{S,S\}=0$. The associated Schouten bracket us given by $\SN{f,g}_{S}=(-1)^{\widetilde{f}+1} \{ \{ S,f  \},g  \}$, with  $f,g \in C^{\infty}(M)$. Note that non-trivial Schouten structures cannot exist on pure even manifolds.  The Jacobi identities on the brackets are equivalent to the self-commutating conditions of the structures.  \\

\noindent  We closely follow Voronov \cite{voronov-2004} in conventions concerning $L_{\infty}$-algebras. A  vector space $V = V_{0}\oplus V_{1}$ endowed with a sequence of odd n-linear operators of $n \geq 0$ (which we denote as $(\bullet,\cdots,\bullet) $) is said to be an $L_{\infty}$-algebra (c.f. \cite{Lada:1994mn,Lada:1992wc}) if
\begin{enumerate}
\item The operators are symmetric
\begin{equation}
(a_{1}, a_{2}, \cdots, a_{i},a_{j}, \cdots , a_{n}) = (-1)^{\widetilde{a}_{i}\widetilde{a}_{j}}(a_{1}, a_{2}, \cdots, a_{j},a_{i}, \cdots , a_{n}).
\end{equation}
\item The generalised Jacobi identities
\begin{equation}
\sum_{k+l=n-1} \sum_{(k,l)-\textnormal{unshuffels}}(-1)^{\epsilon}\left( (a_{\sigma(1)}, \cdots , a_{\sigma(k)}), a_{\sigma(k+1)}, \cdots, a_{\sigma(k+l)} \right)=0
\end{equation}
hold for all $n \geq 1$. Here $(-1)^{\epsilon}$ is a sign that arises due to the exchange of homogenous elements $a_{i} \in V$. Recall that a $(k,l)$-unshuffle is a permutation of the indices $1, 2, \cdots k+l$ such that $\sigma(1) < \cdots < \sigma(k)$ and $\sigma(k+1) < \cdots < \sigma(k+l)$. The LHS of the above are referred to as Jacobiators.
\end{enumerate}

\noindent It must be noted that the above definitions are shifted as compared to the original definitions of Lada \& Stasheff.  Specifically, if $V = \Pi U$ is an $L_{\infty}$-algebra (as above) then we have a series of brackets on $U$ that are skew-symmetric and  even/odd for an even/odd number of arguments. Let  $x_{i} \in U$ and  we define the brackets on $U$ viz
\begin{equation}
\Pi \{x_{1}, \cdots , x_{n} \} = (-1)^{(\widetilde{x}_{1}(n-1) + \widetilde{x}_{2}(n-2)+ \cdots + \widetilde{x}_{n-1} +1)}(\Pi x_{1}, \cdots , \Pi x_{n}).
\end{equation}
One may call $V = \Pi U$ an $L_{\infty}$-antialgebra.  However, we will refer to the bracket structures on $V$ and $U$ as  $L_{\infty}$-algebras keeping in mind the above identification.\\

\begin{warning}
There is plenty of room here over the assignments of gradings and symmetries. We prefer to work in the ``super-setting". It must also be remarked that in most applications the zero bracket vanishes identically. In such cases we say that the $L_{\infty}$-algebra is \emph{strict}. In the literature $L_{\infty}$-algebras with a non-vanishing zero bracket are called ``weak", ``with background" or ``curved". By default, we will include a non-vanishing zero bracket unless otherwise stated.
\end{warning}

\begin{definition}{Definition}\label{homotopy Schouten}
A homotopy Schouten algebra is a commutative, associative, unital algebra $\mathcal{A}$ equipped with an $L_{\infty}$-algebra structure such that the odd $n$-multilinear operations known as higher Schouten brackets, are multiderivations over the product:
\begin{eqnarray}
(a_{1}, a_{2}, \cdots a_{r-1}, a_{r}a_{r+1}) &=& (a_{1}, a_{2}, \cdots a_{r-1}, a_{r}) a_{r+1}\\
 \nonumber &+& (-1)^{\widetilde{a_{r}}(\widetilde{a_{1}} +\widetilde{a_{2}} + \cdots + \widetilde{a_{r-1}} +1)}a_{r}(a_{1}, a_{2}, \cdots a_{r-1},a_{r+1}),
\end{eqnarray}
with $a_{I} \in \mathcal{A}$.
\end{definition}

\noindent In order to define a homotopy Poisson algebra one needs to consider a shift in parity to keep inline with our conventions. Up to this shift, the definition carries over directly.

\begin{definition}{Definition}\label{homotopy Poisson}
A homotopy Poisson algebra is a commutative, associative, unital algebra $\mathcal{A}$ equipped with an $L_{\infty}$-algebra structure such that the  $n$-multilinear operations known as higher Poisson brackets (even/odd for even/odd number of arguments), are multiderivations over the product:
\begin{eqnarray}
\{a_{1}, a_{2}, \cdots a_{r-1}, a_{r}a_{r+1}\} &=& \{a_{1}, a_{2}, \cdots a_{r-1}, a_{r}\} a_{r+1}\\
 \nonumber &+& (-1)^{\widetilde{a_{r}}(\widetilde{a_{1}} +\widetilde{a_{2}} + \cdots + \widetilde{a_{r-1}} +r)}a_{r}\{a_{1}, a_{2}, \cdots a_{r-1},a_{r+1}\},
\end{eqnarray}
with $a_{I} \in \mathcal{A}$.
\end{definition}

\noindent Following Voronov \cite{voronov-2004} it is known how to construct a series of brackets from the ``initial data"-- $\left(\mathcal{L},\pi, \Delta \right)$. Here $\mathcal{L}$ is a Lie (super)algebra equipped with a projector ($\pi^{2} = \pi$) onto an abelian subalgebra satisfying the distributivity rule $\pi[a,b] = \pi[\pi a,b] + \pi[a, \pi b]$ for all $a,b \in \mathcal{L}$.  Given an element $\Delta \in \mathcal{L}$ a series of brackets on the abelian subalgebra, $V \subset \mathcal{L}$ is defined as

\begin{equation}
(a_{1},a_{2}, \cdots,a_{n}) = \pi[\cdots[[[\Delta, a_{1} ],a_{2}],\cdots a_{n}],
\end{equation}

\noindent with $a_{i}$ in $V$. The zero bracket is defined as
\begin{equation}
(\emptyset) = \pi \Delta.
\end{equation}
\noindent Such brackets have the same parity as $\Delta$ and are symmetric.  The series of brackets is referred to as higher derived brackets generated by $\Delta$. A theorem due to Voronov states that  for an odd generator $\Delta \in \mathcal{L}$ the n-th Jacobiator is given by the n-th higher derived bracket generated by $\Delta^{2}$.

\begin{equation}
J^{n}(a_{1},a_{2}, \cdots,a_{n}) = \pi [\cdots[[[\Delta^{2}, a_{1} ],a_{2}],\cdots a_{n}].
\end{equation}

\noindent In particular we have that if $\Delta^{2} =0$ then the series of higher derived brackets is an $L_{\infty}$-algebra. Note that if $\pi \Delta =0$ then the $L_{\infty}$-algebra is strict. \\

\begin{definition}{Definition}\label{higher Poisson structure}
Let $M$ be a manifold. An even multivector field $P \in C^{\infty}(\Pi T^{*}M)$ is said to be a higher Poisson structure if and only if $\SN{P,P}=0$, where the bracket is the canonical Schouten--Nijenhuist bracket on $\Pi T^{*}M$.
\end{definition}

\noindent Via Voronov's higher derived bracket formalism one obtains a homotopy Poisson algebra on $C^{\infty}(M)$ when $M$ is equipped with a higher Poisson structure. The brackets being given by

\begin{equation}
\{f_{1}, f_{2}, \cdots, f_{r}   \}_{P} =  (-1)^{\widetilde{f}_{1} (r-1) + \widetilde{f}_{2}(r-2) + \cdots + \widetilde{f}_{r-1} +r    } \left .\SN{ \cdots  \SN{ \SN{P, f_{1}}, f_{2}  }, \cdots , f_{r}   }\right|_{M},
\end{equation}

\noindent where $f_{I} \in C^{\infty}(M)$. Note the above sign factor ensures that the higher Poisson brackets are skewsymmetric. It is possible to ignore this sign factor and work with antisymmetric brackets.   \\

\begin{definition}{Definition}\label{higher Schouten structure}
Let $M$ be a manifold. An odd function $S \in C^{\infty}(T^{*}M)$ is said to be a higher Schouten structure if and only if $\{S,S\}=0 $, where the bracket is the canonical Poisson bracket on $T^{*}M$.
\end{definition}

\noindent One obtains a homotopy Schouten algebra on $C^{\infty}(M)$ when $M$ is equipped with a higher Schouten structure. The brackets being given by
\begin{equation}
(f_{1}, f_{2}, \cdots , f_{r})_{S} = \left.\{  \cdots \{\{S,f_{1}\}, f_{2}  \}, \cdots , f_{r}\}    \right|_{M},
\end{equation}
\noindent where $f_{I} \in C^{\infty}(M)$. \\

\begin{definition}{Definition}\label{homological vector}
An odd vector field $Q \in \Vect(M)$ that ``squares to zero", that is   $[Q,Q]= 2 Q^{2}=0$ shall be known as a homological vector field.
\end{definition}

\begin{definition}{Definition}\label{Qmanifold}
A manifold equipped with a homological vector field shall be  known as a Q-manifold.
\end{definition}

\noindent In fact, all $L_{\infty}$-algebras can be understood in terms of formal Q-manifolds. If set $V = \Pi U$ (as vector spaces), then we can consider elements of $\Pi U$ as being ``constant valued" vector fields

\begin{eqnarray}
i: \Pi U & \hookrightarrow & \Vect(\Pi U)\\
\nonumber a = a^{\alpha}s_{\alpha} &\rightarrow& a^{\alpha}\frac{\partial}{\partial \xi^{\alpha}},
\end{eqnarray}

\noindent where we have picked an ``odd basis", $\widetilde{s_{\alpha}} = (\widetilde{\alpha}+1)$ and local coordinates $\{\xi^{\alpha}\}$ on $\Pi U$ considered as a formal supermanifold. An $L_{\infty}$-algebra  is encoded in a homological vector field of arbitrary weight (assign weight one to the linear coordinates)

\begin{equation}
Q = \left( Q^{\delta}_{0} + \xi^{\alpha}Q_{\alpha}^{\delta} + \frac{1}{2!}\xi^{\alpha}\xi^{\beta} Q_{\beta \alpha}^{\delta} + \frac{1}{3!} \xi^{\alpha}\xi^{\beta} \xi^{\gamma} Q_{\gamma \beta \alpha}^{\delta} + \cdots  \right)\frac{\partial}{\partial \xi^{\delta}}.
\end{equation}

\noindent The series of brackets are then given by

\begin{equation}
(a_{1}, a_{2}, \cdots , a_{r}) = \pi_{0} \left( \left[  \cdots [\cdots [Q, a_{1}], a_{2}], \cdots , a_{r} \right]\right),
\end{equation}

\noindent where the projector $\pi_{0}$ here is the evaluation at the origin. In terms of a local basis the brackets can be expressed as

\begin{equation}
(s_{\alpha_{1}}, s_{\alpha_{2}}, \cdots , s_{\alpha_{r}} ) = (-1)^{(\sum_{i=1}^{r}\widetilde{\alpha}_{i}) } Q^{\beta}_{\alpha_{1} \alpha_{2} \cdots \alpha_{r}} s_{\beta}.
\end{equation}

\noindent A little more explicitly the first few brackets are given by

\begin{eqnarray}
\nonumber (\emptyset) = Q_{0}^{\delta}s_{\delta}, && (s_{\alpha}) = (-1)^{\widetilde{\alpha}} Q_{\alpha}^{\delta}s_{\delta},\\
\nonumber (s_{\alpha}, s_{\beta}) = (-1)^{\widetilde{\alpha} + \widetilde{\beta}}Q_{\alpha \beta}^{\delta}s_{\delta}, && (s_{\alpha}, s_{\beta}, s_{\gamma}) = (-1)^{\widetilde{\alpha} + \widetilde{\beta}+  \widetilde{\gamma} }Q_{\alpha \beta\gamma}^{\delta}s_{\delta}.
\end{eqnarray}

\begin{remark}
It is also true   that $A_{\infty}$-algebras and $C_{\infty}$-algebras can be understood in terms of formal Q-manifolds. As we will have no use for them here we will not elaborate further.
\end{remark}

\begin{warning} The notion of homotopy Schouten and homotopy Poisson algebra used in this work is far more restrictive than found elsewhere in the literature. We will make no use of the theory of (pr)operads in our constructions, \cite{galvezcarrillo-2009,Ginzburg-1994}.  Specifically, the homotopy Poisson and Schouten algebras defined here  are not the cofibrant resolution of the appropriate operads. Only the Jacobi identity has been weakened \emph{up to homotopy}. Furthermore, these notions can be formulated in the $\mathds{Z}$-graded setting. For example see Tamarkin \& Tsygan \cite{Tamarkin2000}, Cattaneo \& Felder \cite{Cattaneo2007} and Mehta \cite{Mehta-2010}.   However, the  $\mathds{Z}_{2}$-graded notions used in this paper seem very natural for supergeometry and suit the purposes of this work very well.
\end{warning}

\section{Graded manifolds and $L_{\infty}$-algebroids}\label{Linftyalgebroids}

\noindent  Recall the definition of a (multi)graded manifold as a manifold $\mathcal{M}$, equipped with a privileged class of atlases where the coordinates are assigned weights taking values in $\mathds{Z}^{n}$  ($n \in \mathds{N}$) and the coordinate transformations are polynomial in coordinates with nonzero weights respecting the  weights, see for example \cite{Grabowski2009,Roytenberg:2001,Voronov:2001qf,Severa-2001}. Generally the weight  will be independent of the Grassmann parity. Moreover, any sign factors that arise will be due to the Grassmann parity and we do not include any possible extra signs due to the weight(s). In simpler terms, we have a manifold equipped with a  distinguished class of charts and diffeomorphisms between them respecting the $\mathds{Z}_{2}$-grading as well as the additional $\mathds{Z}^{n}$-grading. These gradings then pass over to geometric objects on graded manifolds.\\

 \noindent Let us employ local coordinates $\{ x^{A}\}$ on an arbitrary graded manifold $\mathcal{M}$. We  will use the notation  $\w(x^{A}) = (\w_{1}(x^{A}), \w_{2}(x^{A}) \, \cdots , \w_{n}(x^{A})) \in \mathds{Z}^{n}$  for  the weight. One can then pass to a total weight $\#(x^{A}) = \sum_{i=1}^{n}w_{i}(x^{A})$. In this work we will only require up to a bi-weight. That is at most the weights will take their values in $\mathds{Z}^2$.\\

\noindent A vector bundle structure $E \rightarrow M$ is  equivalent to  the total space of the anti-vector bundle $\Pi E$ having a certain graded structure, under the assumption of no external weighted parameters being employed. To be more specific, let us employ natural coordinates $\{x^{A}, \xi^{\alpha} \}$ on $\Pi E$.  We assume $M$ is just a manifold as opposed to a graded manifold. The  parities being given by $\widetilde{x}^{A} = \widetilde{A}$ and $\widetilde{\xi}^{\alpha}= \widetilde{\alpha}+1$. Furthermore, let us assign the weights $\w(x^{A}) = 0 $ and $\w(\xi^{\alpha}) = 1$. Then the admissible changes of coordinates are necessarily of the form $\overline{x}^{A} = \overline{x}^{A}(x)$ and $\overline{\xi}^{\alpha} = \xi^{\beta}T_{\beta}^{\:\: \alpha}(x) $. Thus, we demonstrated this assertion. The zero section of $E \rightarrow M$ is identified with the zero weight part of $\Pi E$. More correctly, $C^{\infty}(M)\subset C^{\infty}(\Pi E)$ as the zero weight subalgebra.   Note, that other choices in weight are also perfectly valid.\\

\begin{definition}{Definition}
A vector bundle $E \rightarrow M$ is said to have an $L_{\infty}$-algebroid structure if there exists  a homological vector field $Q \in \Vect(\Pi E)$. That is, the total space of the anti-vector bundle $\Pi E$ is a Q-manifold. The pair $(\Pi E, Q )$ will be known as an $L_{\infty}$-algebroid.
\end{definition}

\noindent Note that there is no condition on the weight of the homological vector field in this definition. Recall that for a Lie algebroid the weight of the homological vector field is one.\\

\noindent  Throughout this work  the Q-manifold $(\Pi E, Q)$ is considered as the \emph{primary object}. Morphisms of $L_{\infty}$-algebroids are understood as  morphisms in the category of (graded) Q-manifolds.\\

\noindent If we employ natural local coordinates  $\{x^{A}, \xi^{\alpha}  \}$ the homological vector field is of the form:

\begin{eqnarray}\label{homologicalfieldLieinfity}
Q &=& Q^{A}(x, \xi) \frac{\partial}{\partial x^{A}} + Q^{\alpha}(x, \xi)\frac{\partial }{\partial \xi^{\alpha}}\\
\nonumber &=& \left(Q^{A}(x) + \xi^{\alpha}Q_{\alpha}^{A}(x) + \frac{1}{2!} \xi^{\alpha}\xi^{\beta} Q_{\beta \alpha}^{A}(x) +\cdots    \right)\frac{\partial}{\partial x^{A}}\\
\nonumber &+& \left( Q^{\alpha}(x) + \xi^{\beta}Q_{\beta}^{\alpha}(x)  + \frac{1}{2!} \xi^{\beta}\xi^{\gamma}Q_{\gamma \beta}^{\alpha}(x) +  \cdots \right)\frac{\partial }{\partial \xi^{\alpha}}.
\end{eqnarray}

\noindent Recall that the algebra of smooth functions on a graded manifold is understood as the formal completion of  the polynomial algebra in  weighted coordinates. Thus, the components of the homological vector field may be  understood very formally. Alternatively, more  concretely one could  consider only finite order polynomials. This leads to the notion a Lie $n$-algebroid as an $L_{\infty}$-algebroid whose homological vector field concentrated in weight  up to $n-1$. We will not dwell on this. \\

\begin{definition}{Definition} An $L_{\infty}$-algebroid  $(\Pi E, Q)$ is said to be a strict $L_{\infty}$-algebroid if and only of the homological vector field along the ``zero section" $M \subset \Pi E$ is a homological vector field on $M$.
\end{definition}

\noindent   In local coordinates this is the statement that  $Q^{\alpha}(x)=0$.  In a more invariant language, an $L_{\infty}$-algebroid is strict if and only if the homological vector field $Q \in \Vect(\Pi E)$ is the formal sum  of strictly non-negative weight vector fields: $Q  = \sum_{i=0}^{\infty}Q_{i}$. Such a condition automatically holds for Lie algebroids and reproduces the notion of a strict $L_{\infty}$-algebra thought of as an $L_{\infty}$-algebroid over a ``point". \\

\noindent Throughout this work we will not insist upon strictness \emph{a priori}, though it will feature later when discussing higher Schouten and higher Poisson structures associated with $L_{\infty}$-algebroids.\\

\begin{small}
\begin{aside}
An $L_{\infty}$-algebroid can also be understood as an $L_{\infty}$-algebra on the module of sections $\Gamma(E)$ such that the higher anchors arise in terms of the Leibnitz rule. A little more specifically (being  quite lax about signs) one has
\begin{equation}
[u_{1}, \cdots u_{r}, f \: u_{r+1}] = a(u_{1}, \cdots, u_{r})[f] u_{r+1} \pm f\: [u_{1}, \cdots u_{r}, u_{r+1}],
\end{equation}
\noindent with $u_{I} \in \Gamma(E)$ and $f \in C^{\infty}(M)$. In terms of a basis $s_{\alpha}$ ($\widetilde{s}_{\alpha}= \widetilde{\alpha}$) the anchors and brackets are given by:
\begin{subequations}
\begin{eqnarray}
a(s_{\alpha_{1}}, \cdots , s_{\alpha_{r}}) &=& \pm Q^{A}_{\alpha_{1} \cdots \alpha_{r}}\frac{\partial}{\partial x^{A}},\\
\left[s_{\alpha_{1}}, \cdots, s_{\alpha_{r}}\right] &=& \pm Q^{\beta}_{\alpha_{1}  \cdots \alpha_{r}}s_{\beta}.
\end{eqnarray}
\end{subequations}
\noindent The condition of strictness on the homological vector field $Q\in \Vect(\Pi E)$ is identical to the $L_{\infty}$-algebra on the module of sections being strict. That is there is no zero-bracket. However, there is still (potentially) a zero-anchor. We believe that the formulation in terms of Q-manifolds is clearer and more powerful than considering the module of sections.
\end{aside}
\end{small}

\section{The higher Schouten and Poisson structures associated with an $L_{\infty}$-algebroid}\label{main theorem}

\noindent We are now in a position to state and prove the main theorem of this paper.

\begin{theorem}{Theorem}\label{theorem 1}
An $L_{\infty}$-algebroid  $(\Pi E , Q)$ is equivalent to:
\begin{enumerate}
\item A  higher Schouten structure $S\in C^{\infty}(T^{*}(\Pi E^{*}))$ of total weight one.
\item A  higher Poisson structure  $P \in C^{\infty}(\Pi T^{*}(E^{*}))$ of total weight one.
\end{enumerate}
\end{theorem}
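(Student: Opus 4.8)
The plan is to realise both $S$ and $P$ as the images of the homological vector field $Q$ under canonical double vector bundle isomorphisms (the ``symbol'' and ``interior product'' dictionaries for vector fields on (odd) cotangent bundles), and then to reduce $[Q,Q]=0$ to the self-commutation conditions $\{S,S\}=0$ and $\SN{P,P}=0$. First I would fix the bi-grading: put on $\Pi E^{*}$ the standard weight $\w(x^{A})=0$, $\w(\theta_{\alpha})=1$ (so that $C^{\infty}(M)\subset C^{\infty}(\Pi E^{*})$ is the weight-zero subalgebra), and on $T^{*}(\Pi E^{*})$ the bi-weight whose first component is the cotangent-fibre degree (base coordinates $0$, all momenta $1$) and whose second component is the cotangent lift of the weight on $\Pi E^{*}$ (a momentum gets minus the weight of its conjugate coordinate); with this bi-weight the canonical Poisson bracket is weight-homogeneous. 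One does the analogous thing with $E^{*}$, $\w(p_{\alpha})=1$, and $\Pi T^{*}(E^{*})$, so that the canonical Schouten--Nijenhuis bracket there is weight-homogeneous. Under these gradings the ``weight minus one'' Schouten/Poisson structures attached to ordinary Lie algebroids acquire total weight $+1$, which fixes the normalisation in the statement.

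The heart of the argument is the identification of $\Vect(\Pi E)$. For the Schouten side, I would invoke the canonical symplectomorphism $T^{*}(\Pi E)\cong T^{*}(\Pi E^{*})$ (the graded/super analogue of the Mackenzie--Xu isomorphism; this is one of the appendix lemmas), which intertwines the canonical even Poisson brackets and is compatible with the bi-gradings above. Composing it with the symbol embedding $\Vect(\Pi E)\hookrightarrow C^{\infty}(T^{*}(\Pi E))$ --- which is parity preserving and takes the commutator of vector fields to the canonical Poisson bracket of symbols --- gives an embedding $\Vect(\Pi E)\hookrightarrow C^{\infty}(T^{*}(\Pi E^{*}))$ that in natural coordinates sends $X=X^{A}(x,\xi)\partial_{x^{A}}+X^{\alpha}(x,\xi)\partial_{\xi^{\alpha}}$ to $X^{A}(x,\chi)\bar p_{A}+X^{\alpha}(x,\chi)\theta_{\alpha}$, where the fibre coordinates $\xi$ of $\Pi E$ are carried to the momenta $\chi$ conjugate to $\theta$. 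A short weight count --- the $\chi$'s carry total weight $0$, while $\theta$ and $\bar p$ carry total weight $1$ --- shows that the image is exactly the total-weight-one subspace of $C^{\infty}(T^{*}(\Pi E^{*}))$, so $X\leftrightarrow\sigma_{X}$ is a bijection onto that subspace. For the Poisson side I would run the parallel argument with the odd cotangent bundles: the interior-product embedding $\Vect(\Pi E)\hookrightarrow C^{\infty}(\Pi T^{*}(\Pi E))$ shifts parity but still takes the commutator to the Schouten--Nijenhuis bracket (the Schouten bracket of two vector fields being their Lie bracket), and the corresponding canonical odd double vector bundle morphism of the appendix carries this into $C^{\infty}(\Pi T^{*}(E^{*}))$; in coordinates $X\mapsto X^{A}(x,p^{*})\,x^{*}_{A}+X^{\alpha}(x,p^{*})\,p_{\alpha}$, with $\xi$ going to $p^{*}$. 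Again a weight count identifies the image with the total-weight-one subspace.

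Applying this to the odd field $Q$ yields an odd $S\in C^{\infty}(T^{*}(\Pi E^{*}))$ and an even $P\in C^{\infty}(\Pi T^{*}(E^{*}))$, both of total weight one, and both assignments are bijections onto the respective total-weight-one subspaces, with inverses recovering $Q$. Since the embeddings are morphisms of ($\mathds{Z}_{2}\times\mathds{Z}^{2}$-graded) Lie algebras, $[Q,Q]=0$ if and only if $\{S,S\}_{T^{*}(\Pi E^{*})}=0$, if and only if $\SN{P,P}_{\Pi T^{*}(E^{*})}=0$; by Definitions \ref{higher Poisson structure} and \ref{higher Schouten structure} this is precisely the statement that $(\Pi E,Q)$ is an $L_{\infty}$-algebroid if and only if $S$ is a higher Schouten structure of total weight one, equivalently $P$ is a higher Poisson structure of total weight one. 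The induced homotopy Schouten/Poisson algebras on $C^{\infty}(\Pi E^{*})$ and $C^{\infty}(E^{*})$ then follow from Voronov's higher derived bracket construction applied to $S$ and $P$.

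The genuinely non-formal content sits in the appendix: that the canonical maps $T^{*}(\Pi E)\cong T^{*}(\Pi E^{*})$ and its odd counterpart are globally well defined (independent of the chosen trivialisation of $E$, under which the base-direction momenta acquire the usual correction terms that are quadratic in the fibre and fibre-momentum coordinates), that they intertwine the canonical Poisson/Schouten brackets, and that they respect the bi-gradings --- so that ``linear in the cotangent-fibre coordinates'' on the $\Vect(\Pi E)$ side corresponds exactly to ``total weight one'' on the other side. I expect this bracket- and grading-compatibility of the double vector bundle morphisms to be the main obstacle; the remaining ingredients (the weight bookkeeping and the fact that the symbol and interior-product maps are Lie-algebra morphisms) are routine.
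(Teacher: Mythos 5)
Your proposal is correct and follows essentially the same route as the paper: the even/odd principal symbol maps $\Vect(\Pi E)\hookrightarrow C^{\infty}(T^{*}(\Pi E))$ and $C^{\infty}(\Pi T^{*}(\Pi E))$, transported by the canonical double vector bundle symplectomorphisms of the appendix, with the bi-weight bookkeeping giving total weight one. The only presentational difference is that you make explicit the bijection of $\Vect(\Pi E)$ onto the total-weight-one subspaces (hence the ``equivalence''), which the paper leaves implicit in the invertibility of the symbol and of $R$.
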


\begin{proof}
\noindent Let us employ natural coordinates $\{x^{A}, \xi^{\alpha} \}$ on $\Pi E$. Let the homological vector field defining the $L_{\infty}$-algebroid be  given by $Q = Q^{A}(x, \xi) \frac{\partial}{\partial x^{A}} + Q^{\alpha}(x, \xi)\frac{\partial }{\partial \xi^{\alpha}} \in \Vect(\Pi E)$ .\\
\begin{enumerate}
\item Let us employ natural local coordinates $\{x^{A}, \eta_{\alpha}, p_{A} , \pi^{\alpha}\}$ and $\{x^{A}, \xi^{\alpha}, p_{A}, \pi_{\alpha} \}$ on $T^{*}(\Pi E^{*})$ and $T^{*}(\Pi E)$ respectively, see Appendix(\ref{A1}). The bi-weights are assigned as $\w(x^{A}) =(0,0)$, $\w(\eta_{\alpha}) =(1,0)$, $\w(p_{A}) =(0,1)$, $\w(\pi^{\alpha}) =(-1,1)$ , $\w(\xi^{\alpha}) =(-1,1)$, $\w(\pi_{\alpha}) =(1,0)$. Note, these weights are compatible with the double vector bundle structures. Then taking the even principle symbol\footnote{see for example H$\ddot{\textnormal{o}}$rmander \cite{Hormander:1985III}.}  $\frac{\partial }{\partial x^{A}} \rightarrow p_{A}$, $\frac{\partial}{\partial \xi^{\alpha}} \rightarrow \pi_{\alpha}$  of the homological vector field gives:
    \begin{equation}
    \sigma Q = Q^{A}(x, \xi) p_{A}  + Q^{\alpha}(x, \xi) \pi_{\alpha} \in C^{\infty}(T^{*}(\Pi E)).
    \end{equation}
    \noindent It is well-know that the even  principle symbol maps commutators of vector fields to canonical Poisson brackets. This can very easily be directly  verified and directly follows from the definition of the principle symbol. Thus,
    \begin{equation}
    \sigma[Q,Q] =  \{\sigma Q, \sigma Q  \}_{T^{*}(\Pi E)} = 0.
    \end{equation}
    \noindent Then use the canonical double vector bundle  morphism (see Appendix(\ref{A1}) and/or \cite{Bruce-2010,mackenzie-2002,Voronov:2001qf}) \newline  $R : T^{*}(\Pi E^{*}) \rightarrow T^{*}(\Pi E)$ given by $R^{*}(\pi_{\alpha}) = \eta_{\alpha}$ and $R^{*}(\xi^{\alpha}) =  (-1)^{\widetilde{\alpha}}\pi^{\alpha}$ to define
    \begin{equation}
    S = (R^{-1})^{*}(\sigma Q) = Q^{A}(x, \pi)p_{A} + Q^{\alpha}(x, \pi) \eta_{\alpha}\in C^{\infty}(T^{*}(\Pi E^{*})),
    \end{equation}
    \noindent where we have used the shorthand notiation $Q^{A}(x, \pi) = (R^{-1})^{*}Q^{A}(x, \pi)$ and  \newline $Q^{\alpha}(x, \pi) = (R^{-1})^{*}Q^{\alpha}(x, \pi)$. In essence this is just the change of variables \newline $\pi_{\alpha} \rightarrow \eta_{\alpha}$ and $\xi^{\alpha} \rightarrow (-1)^{\widetilde{\alpha}} \pi^{\alpha}$ in the algebra of weighted polynomials. The condition $\{S,S \}_{T^{*}(\Pi E^{*})}=0$ follows from the fact that the canonical double vector bundle morphism is a symplectomorphism, see Lemma(\ref{A1}). Thus, $S$ is a higher Schouten structure on the total space of $\Pi E^{*}$ see Def.(\ref{higher Schouten structure}). Furthermore, it is clear that $\#(S) = 1$ by inspection.
\item Let is employ natural local coordinates $\{x^{A}, e_{\alpha}, x^{*}_{A}, e_{*}^{\alpha}  \}$ and $\{x^{A}, \xi^{\alpha}, x^{*}_{A}, \xi^{*}_{\alpha} \}$ on $\Pi T^{*}(E^{*})$ and $\Pi T^{*}(\Pi E)$ respectively, see Appendix(\ref{A2}). The bi-weights are assigned as $\w(x^{A}) =(0,0)$, $\w(e_{\alpha}) =(1,0)$, $\w(x^{*}_{A}) =(0,1)$, $\w(e_{*}^{\alpha}) =(-1,1)$, $\w(\xi^{\alpha}) =(-1,1)$, $\w(\xi_{\alpha}^{*}) =(1,0)$. Note, these weights are compatible with the double vector bundle structures. Then taking the odd principle symbol (a.k.a. odd isomorphism \cite{Voronov:1992}) $\frac{\partial}{\partial x^{A}} \rightarrow x^{*}_{A}$, $\frac{\partial}{\partial \xi^{\alpha}} \rightarrow \xi^{*}_{\alpha}$ of $Q$  gives:
    \begin{equation}
    \varsigma Q = Q^{A}(x, \xi) x^{*}_{A}  + Q^{\alpha}(x, \xi) \xi^{*}_{\alpha} \in C^{\infty}(\Pi T^{*}(\Pi E)).
    \end{equation}
    \noindent The odd principle symbol maps commutators of vector fields to canonical Schouten(--Nijenhuist) brackets. This can be easily and directly varified. Thus,
    \begin{equation}
    \varsigma[Q,Q] = \SN{\varsigma Q, \varsigma Q}_{\Pi T^{*}(\Pi E)} = 0.
    \end{equation}
    \noindent Then use the canonical double vector bundle morphism (see Appendix(\ref{A2}) and/or \cite{Bruce-2010})\newline $R : \Pi T^{*}(E^{*}) \rightarrow \Pi T^{*}(\Pi E)$ given by $R^{*}(\xi^{*}_{\alpha}) = - e_{\alpha}$ and $R^{*}(\xi^{\alpha}) = e_{*}^{\alpha}$ to define
    \begin{equation}
    P = (R^{-1})^{*} (\varsigma Q) = Q^{A}(x, e_{*})x_{A}^{*} - Q^{\alpha}(x, e_{*})e_{\alpha} \in C^{\infty}(\Pi T^{*}(E^{*})),
   \end{equation}
    \noindent we have used the shorthand  notiation $Q^{A}(x, e_{*}) = (R^{-1})^{*}Q^{A}(x, e_{*})$ and  \newline $Q^{\alpha}(x, e_{*}) = (R^{-1})^{*}Q^{\alpha}(x, e_{*})$. In essence this is just the change of variables $\xi_{\alpha}^{*} \rightarrow -e_{\alpha}$ and $\xi^{\alpha} \rightarrow e_{*}^{\alpha}$ in the algebra of weighted polynomials.  The condition $\SN{P,P }_{\Pi T^{*}(E^{*})} =0$ follows from the fact that the canonical double vector bundle morphism is a symplectomorphism, see Lemma(\ref{A2}). Thus $P$ is a higher Poisson structure on $E^{*}$, see Def.(\ref{higher Poisson structure}). Furthermore, it is clear that $\#(P) = 1$ by inspection.
\end{enumerate}
\end{proof}

\begin{remark}
Generally a Q-manifold $(\mathcal{M}, Q_{\mathcal{M}})$ (possibly in the category of graded manifolds) can  be considered as a higher Schouten or higher Poisson manifold of ``order one". That is the associated $L_{\infty}$-algebras on $C^{\infty}(\mathcal{M})$ consist of a single one-bracket. This is implemented  via $Q_{\mathcal{M}} \rightsquigarrow S_{\mathcal{M}} =\sigma Q_{\mathcal{M}}$ or $Q_{\mathcal{M}} \rightsquigarrow P_{\mathcal{M}} =\varsigma Q_{\mathcal{M}}$. Note that in fact any ``order one" Schouten or Poisson structure is equivalent to a homological vector field.  For the case at hand the ``higher order structure"  is in some sense  \emph{induced} by the canonical double vector bundle morphisms.
\end{remark}

\noindent Let us examine the local expressions in a little more detail.  Explicitly, if the homological vector field is formally given by:

\begin{eqnarray}
Q &=& \sum_{r=0}^{\infty} \left(\frac{1}{r!} \xi^{\alpha_{1}} \xi^{\alpha_{2}} \cdots \xi^{\alpha_{r}} Q_{\alpha_{r} \cdots \alpha_{2} \alpha_{1}}^{A}(x)\right)\frac{\partial}{\partial x^{A}}  \\
 \nonumber &+&  \sum_{r=0}^{\infty}\left(\frac{1}{r!} \xi^{\alpha_{1}} \xi^{\alpha_{2}} \cdots \xi^{\alpha_{r}}  Q_{\alpha_{r} \cdots \alpha_{2} \alpha_{1}}^{\beta}(x)\right)\frac{\partial}{\partial \xi^{\beta}},
\end{eqnarray}

\noindent then we have:
\begin{subequations}
\begin{eqnarray}
S &=& \sum_{r=0}^{\infty}\left((-1)^{\widetilde{\alpha}_{1} + \cdots + \widetilde{\alpha}_{r}}\frac{1}{r!} \pi^{\alpha_{1}} \pi^{\alpha_{2}} \cdots \pi^{\alpha_{r}} Q_{\alpha_{r} \cdots \alpha_{2} \alpha_{1}}^{A}(x)\right) p_{A} \\
 \nonumber &+& \sum_{r=0}^{\infty}\left( (-1)^{\widetilde{\alpha}_{1} + \cdots + \widetilde{\alpha}_{r} }\frac{1}{r!} \pi^{\alpha_{1}} \pi^{\alpha_{2}} \cdots \pi^{\alpha_{r}}  Q_{\alpha_{r} \cdots \alpha_{2} \alpha_{1}}^{\beta}(x)\right)\eta_{\beta},\\
P &=& \sum_{r=0}^{\infty}\left(\frac{1}{r!} e_{*}^{\alpha_{1}} e_{*}^{\alpha_{2}} \cdots e_{*}^{\alpha_{r}} Q_{\alpha_{r} \cdots \alpha_{2} \alpha_{1}}^{A}(x)\right)x^{*}_{A}\\
\nonumber  &-& \sum_{r=0}^{\infty}\left(\frac{1}{r!} e_{*}^{\alpha_{1}} e_{*}^{\alpha_{2}} \cdots e_{*}^{\alpha_{r}}
 Q_{\alpha_{r} \cdots \alpha_{2} \alpha_{1}}^{\beta}(x)\right)e_{\beta}.
\end{eqnarray}
\end{subequations}

\begin{remark}
The higher Schouten and higher Poisson structures associated with an $L_{\infty}$-algebroid are far from being the most general structures that could be studied. The total weight one ensures the higher structures have the correct ``linearity". This opens up another possible generalistion of Lie algebroids as objects dual to more  general higher Schouten  and higher Poisson structures on the total spaces of $\Pi E^{*}$ and $E^{*}$ respectively.
\end{remark}

\noindent These structures provide the  algebras $C^{\infty}(\Pi E^{*})$ and $C^{\infty}(E^{*})$  with a series of brackets that form homotopy Schouten and homotopy Poisson algebras  respectively. That is $L_{\infty}$-algebras in the sense of Lada \& Stasheff \cite{Lada:1992wc} suitably ``superised" such that the brackets are multiderivations over the commutative product of functions, see Def.(\ref{homotopy Poisson}) and Def.(\ref{homotopy Schouten}).  \\

\noindent  Note that the higher structures can be presented as  the (formal) sum of components homogeneous in  bi-weight $\w = (\w_{1},\w_{2})= (1-n, n)$, for $n \geq 0$. The second weight $\w_{2}$ gives the ``tensor order" of the component  and thus describes the \emph{arity} of the bracket associated with that component. The first weight $\w_{1}$ gives the \emph{weight} of the bracket associated with that component. That is the $n$-aray bracket  on $C^{\infty}(\Pi E^{*})$ or $C^{\infty}(E^{*})$ is of weight $(1-n)$. Recall  that the algebras $C^{\infty}(\Pi E^{*})$ and $C^{\infty}(E^{*})$ naturally carry the weight associated with the bundle vector bundle structure $E^{*}\rightarrow M$.  \\

\noindent  The higher Schouten brackets on $C^{\infty}(\Pi E^{*})$, that is ``vector bundle multivector fields" are provided by:

\begin{equation}
(X_{1}, X_{2}, \cdots , X_{r})_{S} = \left.\{ \cdots\{ \{S, X_{1} \}, X_{2}, \cdots \} X_{r}  \}\right|_{\Pi E^{*} \subset T^{*}(\Pi E^{*})},
\end{equation}

\noindent where $X_{I} \in C^{\infty}(\Pi E^{*})$ and the brackets are canonical Poisson brackets on $T^{*}(\Pi E^{*})$.    \\

\noindent Similarly, the higher Poisson brackets on $C^{\infty}(E^{*})$, that is ``vector bundle  symmetric contravariant tensors" are provided by:

\begin{equation}
\{F_{1}, F_{2}, \cdots , F_{r}  \}_{P} = (-1)^{\varepsilon}\left. \SN{ \cdots \SN{ \SN{P, F_{1}} , F_{2}},  \cdots ,  F_{r}  }   \right|_{E^{*} \subset \Pi T^{*}(E^{*})},
\end{equation}

\noindent where $F_{I} \in C^{\infty}(E^{*})$ and the brackets are the canonical Schouten--Nijenhuist brackets on $\Pi T^{*}(E^{*})$. The sign factor is given by $ \varepsilon = (\widetilde{F}_{1} (r-1) + \widetilde{F}_{2}(r-2) + \cdots + \widetilde{F}_{r-1} +r )$\\

\noindent We are now in a position to state a few direct corollaries to Theorem(\ref{theorem 1}).

\begin{corollary}{Corollary}\label{corollary1}
For a  strict $L_{\infty}$-algebroid $(\Pi E, Q)$, the associated higher Schouten and  higher Poisson algebras on $C^{\infty}(\Pi E^{*})$ and $C^{\infty}(E^{*})$ are as  $L_{\infty}$-algebras both strict.
\end{corollary}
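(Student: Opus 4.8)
The plan is to trace through the correspondence established in Theorem~\ref{theorem 1} and show that the extra hypothesis of strictness — namely $Q^{\alpha}(x) = 0$ in natural coordinates — forces the associated higher Schouten and higher Poisson structures to generate \emph{strict} $L_{\infty}$-algebras, i.e.\ ones with vanishing zero-bracket. Recall from the discussion following Voronov's higher derived bracket formalism that the zero bracket of the $L_{\infty}$-algebra generated by $\Delta$ is $(\emptyset) = \pi\Delta$; equivalently, in the present setting the zero-bracket is $(\emptyset)_S = S|_{\Pi E^{*}}$ (for the Schouten case) and $(\emptyset)_P = \pm P|_{E^{*}}$ (for the Poisson case), the restriction being the projector onto the degree-zero part. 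So the whole claim reduces to checking that strictness implies $S|_{\Pi E^{*}} = 0$ and $P|_{E^{*}} = 0$.

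First I would write down the local form of $S$ and $P$ from the explicit expressions given just after the proof of Theorem~\ref{theorem 1}. The restriction to $\Pi E^{*} \subset T^{*}(\Pi E^{*})$ amounts to setting the conjugate momenta $p_A = 0$ and $\eta_\alpha = 0$; likewise the restriction to $E^{*} \subset \Pi T^{*}(E^{*})$ sets $x^{*}_A = 0$ and $e_\alpha = 0$. Looking at
\[
S = \sum_{r\ge 0}\Bigl((-1)^{\widetilde{\alpha}_1+\cdots+\widetilde{\alpha}_r}\tfrac{1}{r!}\,\pi^{\alpha_1}\cdots\pi^{\alpha_r}Q^{A}_{\alpha_r\cdots\alpha_1}(x)\Bigr)p_A + \sum_{r\ge 0}\Bigl((-1)^{\widetilde{\alpha}_1+\cdots+\widetilde{\alpha}_r}\tfrac{1}{r!}\,\pi^{\alpha_1}\cdots\pi^{\alpha_r}Q^{\beta}_{\alpha_r\cdots\alpha_1}(x)\Bigr)\eta_\beta,
\]
one sees that \emph{every} term of $S$ contains exactly one factor of either $p_A$ or $\eta_\beta$ — this is just the statement that $S$ is the image of a vector field under the principal symbol and hence linear in momenta. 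Therefore $S|_{p=0,\eta=0} = 0$ automatically, \emph{regardless} of strictness. The same holds verbatim for $P$, whose every term carries one factor of $x^{*}_A$ or $e_\beta$. Hence the zero-bracket of both associated $L_{\infty}$-algebras vanishes identically.

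Wait — this shows the zero-bracket always vanishes, which would make the strictness hypothesis in the corollary unnecessary; so the intended statement must concern the \emph{one-bracket} as well, matching the Definition of a strict $L_{\infty}$-algebroid (no zero-bracket for the section algebra) together with the parallel notion of a strict homotopy algebra. The correct reading is: strictness of $(\Pi E, Q)$ means $Q$ is a sum of vector fields of non-negative weight, i.e.\ $Q^{\alpha}(x)=0$, which kills precisely the $r=0$ term $Q^{\beta}(x)\partial_{\beta}$; in the formulas above this removes the $r=0$ summand $Q^{\beta}(x)\eta_\beta$ from $S$ and $-Q^{\beta}(x)e_\beta$ from $P$. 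These are the only weight-$(1,0)$ components, and after restriction they are exactly the pieces that could contribute a nontrivial constant (background) term to the derived brackets on $C^{\infty}(\Pi E^{*})$ and $C^{\infty}(E^{*})$. The argument I would give is therefore: (i) identify the zero-bracket of the homotopy Schouten/Poisson algebra with the restriction of the weight-$(1,0)$ part of $S$, resp.\ $P$; (ii) observe via the explicit local formulas that this weight-$(1,0)$ part is precisely $(R^{-1})^{*}$ of the symbol of the weight-zero (in the $\xi$-grading) part of $Q$, i.e.\ of $Q^{\alpha}(x)\partial_\alpha$; (iii) invoke strictness, $Q^{\alpha}(x)=0$, to conclude this part vanishes, so $(\emptyset)_S = 0$ and $(\emptyset)_P = 0$, which by the remark following Voronov's theorem ($\pi\Delta = 0 \Rightarrow$ strict) is exactly the assertion.

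The main obstacle is purely bookkeeping: one must be careful that the bi-weight assignments made in the proof of Theorem~\ref{theorem 1} ($\w(\xi^{\alpha})=(-1,1)$, $\w(\pi^{\alpha})=(-1,1)$, $\w(e_{*}^{\alpha})=(-1,1)$, etc.) correctly align the ``$\xi$-polynomial degree $=0$'' component of $Q$ with the ``total-weight-one, second-weight-zero'' component of $S$ and $P$, and that the restriction maps used to define the derived brackets (restriction to $\Pi E^{*}$, resp.\ $E^{*}$) genuinely implement the projector $\pi$ appearing in Voronov's construction. Once that identification is made explicit, the corollary is immediate. I would also remark that the converse direction is equally transparent from the same formulas, so ``strict $\Leftrightarrow$ strict'' holds, though only one implication is claimed.
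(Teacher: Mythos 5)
Your final conclusion is the paper's own argument (the corollary is justified there by the one line ``this translates to $S|_{\Pi E^{*}}=0$ and $P|_{E^{*}}=0$, which is clear from counting the weight(s) or examining the local expressions''), but the route you take to it contains a concrete error that derails the middle of your write-up. You assert that restriction to $\Pi E^{*}\subset T^{*}(\Pi E^{*})$ means setting $p_{A}=0$ \emph{and} $\eta_{\alpha}=0$. That is wrong: $\eta_{\alpha}$ is a coordinate \emph{on} $\Pi E^{*}$ (the linear fibre coordinate of $\Pi E^{*}\to M$), not a cotangent momentum. The momenta of $T^{*}(\Pi E^{*})\to\Pi E^{*}$ are $p_{A}$ (conjugate to $x^{A}$) and $\pi^{\alpha}$ (conjugate to $\eta_{\alpha}$), so the restriction sets $p_{A}=0$ and $\pi^{\alpha}=0$ while leaving $\eta_{\alpha}$ alone. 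Likewise on the Poisson side the restriction to $E^{*}$ sets $x^{*}_{A}=0$ and $e_{*}^{\alpha}=0$, not $e_{\alpha}=0$. With the correct restriction, the claim that ``every term of $S$ vanishes automatically'' is false: the $r=0$ term $Q^{\beta}(x)\eta_{\beta}$ of the second sum contains no $p$ and no $\pi$ and therefore survives, giving $(\emptyset)_{S}=S|_{\Pi E^{*}}=Q^{\beta}(x)\eta_{\beta}$ and $(\emptyset)_{P}=P|_{E^{*}}=-Q^{\beta}(x)e_{\beta}$. These vanish precisely when $Q^{\beta}(x)=0$, i.e.\ precisely under the strictness hypothesis. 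The ``Wait --- this would make the hypothesis unnecessary'' alarm you raise is exactly the symptom of the mis-identified restriction.

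Your recovery via weight-counting (the zero-bracket is the bi-weight $(1,0)$ component of $S$, resp.\ $P$, which is $(R^{-1})^{*}$ of the symbol of $Q^{\alpha}(x)\partial_{\alpha}$, killed by strictness) is correct and is in fact the other half of the paper's one-line justification, so the proposal does reach the right statement in the end. But as written it stands on a false intermediate claim plus an ad hoc reinterpretation rather than on the straightforward observation above. Fix the restriction, and steps (i)--(iii) collapse to: $S|_{\Pi E^{*}}=Q^{\beta}(x)\eta_{\beta}$ and $P|_{E^{*}}=-Q^{\beta}(x)e_{\beta}$, both zero iff $Q^{\beta}(x)=0$. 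Your closing remark that the converse also holds is then immediate and correct.
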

\noindent  In terms of the higher Schouten and higher Poisson structures themselves, this translates to the condition $S|_{\Pi E^{*} \subset T^{*}(\Pi E^{*})}=0$ and $P|_{E^{*} \subset \Pi T^{*}(E^{*})}=0$. This is clear from counting the weight(s) or just examining the local expressions. This justifies our nomenclature. Note that Lie algebroids give rise to ``classical-binary"  Schouten and Poisson structures which are clearly strict as higher structures.\\

\noindent By thinking of  $L_{\infty}$-algebras to be $L_{\infty}$-algebroids over a ``point" we  arrive at another corollary.

\begin{corollary}{Corollary}\label{corollary2}
An  $L_{\infty}$-algebra $(U, \{, \cdots ,\})$ is equivalent to:
\begin{enumerate}
\item a homological vector field $Q \in \Vect(\Pi U)$.
\item a  homotopy Schouten algebra on $C^{\infty}(\Pi U^{*})$, with the $n$-th bracket of natural weight $(1-n)$.
\item a  homotopy Poisson algebra on $C^{\infty}(U^{*})$, with the  $n$-th bracket of natural weight $(1-n)$.
\end{enumerate}
If the $L_{\infty}$-algebra is strict,  $Q$ vanishes at the origin, the associated homotopy Schouten and homotopy Poisson algebras are as $L_{\infty}$-algebras both strict.
\end{corollary}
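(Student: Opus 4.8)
The plan is to derive this statement as a direct specialisation of Theorem~\ref{theorem~1} to the case where the base manifold $M$ is a point. First I would observe that an $L_{\infty}$-algebra $(U,\{\bullet,\cdots,\bullet\})$ in the sense fixed in the Preliminaries is, by the discussion following Definition~\ref{Qmanifold}, the same datum as a homological vector field $Q\in\Vect(\Pi U)$, where $\Pi U$ is regarded as a formal graded manifold with the linear coordinates $\{\xi^{\alpha}\}$ assigned weight one. The brackets are recovered as iterated commutators with $Q$ evaluated at the origin, and conversely any such $Q$ produces, via Voronov's higher derived bracket construction with $\mathcal{L}=\Vect(\Pi U)$, $\pi=\pi_{0}$ the evaluation at the origin, and $\Delta=Q$, an $L_{\infty}$-algebra structure on $U$; the homological condition $Q^{2}=0$ is exactly the vanishing of all Jacobiators. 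This establishes the equivalence of (1) with the $L_{\infty}$-algebra. I would phrase this as: ``a vector bundle $E\to M$ with $M$ a point is just a vector space $U$, so an $L_{\infty}$-algebroid $(\Pi E,Q)$ over a point is precisely a homological vector field $Q\in\Vect(\Pi U)$'', i.e. the notion of $L_{\infty}$-algebroid collapses to that of $L_{\infty}$-algebra.

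Next, applying Theorem~\ref{theorem~1} with $E\to M$ taken to be $U\to\{\mathrm{pt}\}$: the total spaces $\Pi E^{*}$ and $E^{*}$ become the vector spaces $\Pi U^{*}$ and $U^{*}$ respectively, and the theorem yields a total-weight-one higher Schouten structure $S\in C^{\infty}(T^{*}(\Pi U^{*}))$ with $\{S,S\}=0$ and a total-weight-one higher Poisson structure $P\in C^{\infty}(\Pi T^{*}(U^{*}))$ with $\SN{P,P}=0$. By the general facts recalled before Definition~\ref{higher~Poisson~structure} and Definition~\ref{higher~Schouten~structure}, such structures equip $C^{\infty}(\Pi U^{*})$ and $C^{\infty}(U^{*})$ with homotopy Schouten and homotopy Poisson algebra structures via the higher derived brackets $(F_{1},\dots,F_{r})_{S}$ and $\{F_{1},\dots,F_{r}\}_{P}$. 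This gives (2) and (3). For the weight claim, I would invoke the bi-weight bookkeeping already carried out after the proof of Theorem~\ref{theorem~1}: the components of $S$ and $P$ of bi-weight $(1-n,n)$ contribute exactly the $n$-ary bracket, which therefore carries weight $(1-n)$ with respect to the natural weight inherited from the linear structure on $\Pi U^{*}$ (resp.\ $U^{*}$). Here one uses that the second weight $\w_{2}$ counts the ``tensor order'' and hence the arity, while the first weight $\w_{1}$ records the weight of the bracket, and $\w_1+\w_2=1$ by total-weight-one.

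Finally, for the strictness assertion I would specialise Corollary~\ref{corollary1}: a strict $L_{\infty}$-algebroid over a point is, in local coordinates, one with $Q^{\alpha}(\mathrm{pt})=0$, i.e.\ $Q$ vanishes at the origin of $\Pi U$, which (by the remark after Definition~\ref{Qmanifold} and the Warning on curved $L_{\infty}$-algebras) is exactly strictness of the $L_{\infty}$-algebra $(U,\{\bullet,\cdots,\bullet\})$, i.e.\ the zero bracket $(\emptyset)$ vanishes. Corollary~\ref{corollary1} then says $S|_{\Pi U^{*}}=0$ and $P|_{U^{*}}=0$, which by the higher derived bracket formula $(\emptyset)_{S}=S|_{\Pi U^{*}}$ (resp.\ for $P$) means precisely that the homotopy Schouten and homotopy Poisson algebras have vanishing zero bracket, i.e.\ are strict as $L_{\infty}$-algebras. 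I do not anticipate a serious obstacle here; the only point requiring mild care is keeping the weight/arity correspondence and the sign conventions (the $\varepsilon$ in the Poisson brackets and the parity-shift between $U$ and $\Pi U$) consistent between the algebroid picture and the point case, but all of this is forced by the conventions already set up and by Theorem~\ref{theorem~1} itself, so the corollary is essentially a matter of unwinding definitions.
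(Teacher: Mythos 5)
Your proposal is correct and follows essentially the same route as the paper: the corollary is obtained by specialising Theorem~\ref{theorem 1} to an $L_{\infty}$-algebroid over a point, using the identification of $L_{\infty}$-algebras with homological vector fields on $\Pi U$ from the Preliminaries and the bi-weight bookkeeping for the weight claim, with strictness handled exactly as in Corollary~\ref{corollary1}. Nothing in your argument deviates from or adds a gap to the paper's (terse) justification.
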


\noindent The above directly generalises what is known about Lie algebras. These  higher Schouten and Poisson brackets are considered to be the homotopy generalisation of the Lie--Schouten and Lie--Poisson bracket.  To the authors knowledge, this association of   homotopy Schouten and homotopy Poisson algebras with  general $L_{\infty}$-algebras  has not appeared  in the literature before.    \\

\noindent Let us be more explicit here. Let the homological vector field describing an arbitrary $L_{\infty}$-algebra be given by
\begin{equation}
Q = \left(  Q^{\delta} +  \xi^{\alpha} Q_{\alpha}^{\delta} + \frac{1}{2!} \xi^{\alpha} \xi^{\beta} Q_{\beta \alpha}^{\delta} + \frac{1}{3!} \xi^{\alpha}\xi^{\beta}\xi^{\gamma}Q_{\gamma \beta \alpha}^{\delta} + \cdots \right)\frac{\partial}{\partial \xi^{\delta}} \in \Vect(\Pi U).
\end{equation}

\noindent Then picking an ``odd" basis $s_{\alpha} \in \Pi U$, $\widetilde{s}_{\alpha} = \widetilde{\alpha} +1$ the symmetric brackets on $\Pi U$ are given by

\begin{equation}
(s_{\alpha_{1}}, s_{\alpha_{2}}, \cdots , s_{\alpha_{r}}) = (-1)^{(\sum_{i=1}^{r}\widetilde{\alpha}_{i}) } Q^{\beta}_{\alpha_{1} \alpha_{2} \cdots \alpha_{r}} s_{\beta}.
\end{equation}

\noindent The associated higher Lie--Schouten brackets on $C^{\infty}(\Pi U^{*})$ are also symmetric and given by

\begin{equation}
(X_{1}, X_{2}, \cdots , X_{r})_{S} = (-1)^{\epsilon} Q^{\beta}_{\alpha_{r} \cdots \alpha_{2}  \alpha_{1}} \eta_{\beta} \frac{\partial X_{1}}{\partial \eta_{\alpha_{1}} } \frac{\partial X_{2}}{\partial \eta_{\alpha_{2}} } \cdots \frac{\partial X_{r}}{\partial \eta_{\alpha_{r}} },
\end{equation}

\noindent the sign factor is given by
\begin{eqnarray}
\epsilon &=& \widetilde{X_{1}}\left( \widetilde{\alpha_{2}} + \widetilde{\alpha_{3}} + \cdots \widetilde{\alpha_{r}}+ r+1 \right) \\
\nonumber &+& \widetilde{X_{2}}\left( \widetilde{\alpha_{3}} + \widetilde{\alpha_{4}} + \cdots \widetilde{\alpha_{r}}+ r+2 \right)\\
\nonumber &+& \widetilde{X_{3}}\left( \widetilde{\alpha_{4}} + \widetilde{\alpha_{5}} + \cdots \widetilde{\alpha_{r}}+ r+3 \right)\\
\nonumber &\vdots& \vdots \\
\nonumber &+&\widetilde{X}_{r-2}\left( \widetilde{\alpha}_{r-1} + \widetilde{\alpha_{r}} -2 \right)\\
\nonumber &+& \widetilde{X}_{r-1}\left( \widetilde{\alpha_{r}} -1 \right)\\
\nonumber &+ & \widetilde{\alpha_{1}} + \widetilde{\alpha_{2}} + \cdots \widetilde{\alpha_{r}}.
\end{eqnarray}
Specifically, the \emph{fundamental Lie--Schouten brackets} are given by

\begin{equation}
(\eta_{\alpha_{1}}, \eta_{\alpha_{2}}, \cdots, \eta_{\alpha_{r}})_{S} = (-1)^{\sum_{i=1}^{r}\widetilde{\alpha_{i}}}Q_{\alpha_{1} \alpha_{2} \cdots \alpha_{r}}^{\beta}\eta_{\beta}.
\end{equation}

\noindent Alternatively one can consider brackets on $U$ that are skew-symmetric. By picking a basis $T_{\alpha} (= \Pi s_{\alpha})$ the are given by

\begin{equation}
\{T_{\alpha_{1}}, T_{\alpha_{2}}, \cdots , T_{\alpha_{r}}  \} = (-1)^{(\sum_{i=1}^{r} \widetilde{\alpha}_{i} (r-i+1) +1) }Q^{\beta}_{\alpha_{1} \alpha_{2} \cdots \alpha_{r}}T_{\beta}.
\end{equation}

\noindent The associated higher Lie--Poisson brackets on $C^{\infty}(U^{*})$ are anti-symmetric and given by

\begin{equation}
\{F_{1}, F_{2}, \cdots, F_{r}  \}_{P} = (-1)^{\varepsilon} Q^{\beta}_{\alpha_{r} \cdots \alpha_{2}  \alpha_{1}} e_{\beta} \frac{\partial F_{1}}{\partial e_{\alpha_{1}}}\frac{\partial F_{2}}{\partial e_{\alpha_{2}}} \cdots \frac{\partial F_{r}}{\partial e_{\alpha_{r}}},
\end{equation}

\noindent the sign factor here is given by

\begin{eqnarray}
 \varepsilon &=& (\widetilde{F}_{1} + 1)(\widetilde{\alpha}_{2} + \widetilde{\alpha}_{3} + \cdots \widetilde{\alpha}_{r}+ r+1)\\
\nonumber &+& (\widetilde{F}_{2} + 1)(\widetilde{\alpha}_{3} + \widetilde{\alpha}_{4} + \cdots\widetilde{\alpha}_{r}+ r+2)\\
\nonumber &\vdots&  \vdots \\
\nonumber &+&(\widetilde{F}_{r-1} + 1)\widetilde{\alpha}_{r}\\
\nonumber &+&\widetilde{F}_{1}(r-1) + \widetilde{F}_{2}(r-1) + \cdots \widetilde{F}_{r-1}\\
\nonumber &+&\widetilde{\alpha}_{1} + \widetilde{\alpha}_{2} + \cdots + \widetilde{\alpha}_{r}+1.
\end{eqnarray}

\noindent Specifically, the \emph{fundamental Lie--Poisson brackets} are given by

\begin{equation}
\{e_{\alpha_{1}}, e_{\alpha_{2}}, \cdots , e_{\alpha_{r}}  \}_{P} = (-1)^{(\sum_{i=1}^{r} \widetilde{\alpha}_{i} (r-i+1) +1) }Q^{\beta}_{\alpha_{1} \alpha_{2} \cdots \alpha_{r}}e_{\beta}.
\end{equation}

 \begin{statement}
 The higher Schouten and Poisson algebras contain the original $L_{\infty}$-algebra. More correctly:\\
 \begin{itemize}
 \item Let us view the vector space $\Pi U \subset C^{\infty}(\Pi U^{*})$ as the weight one functions. Then the $L_{\infty}$-algebra  brackets on $\Pi U$ are exactly given by the restriction of the Lie--Schouten brackets to weight one functions.
 \item Let us view the vector space $U \subset C^{\infty}(U^{*})$ as weight one functions. Then the $L_{\infty}$-algebra brackets on $U$ are exactly given by the restriction of the Lie--Poisson brackets to weight one functions
 \end{itemize}
 \end{statement}

\noindent It is largely a  matter of taste if one wishes to work with ``odd" or ``even" structures when delating with $L_{\infty}$-algebras in the $\mathds{Z}_{2}$-graded setting. However, there is in general less sign factors to handle when working with odd structures. Furthermore, due to the relation with Voronov's higher derived bracket formalism  it seems very natural to consider odd symmetric brackets and higher Schouten structures   as being in some sense primitive or fundamental.

\begin{remark}
The association of an homotopy Schouten algebra with an  $L_{\infty}$-algebra opens up the possibility of describing $L_{\infty}$-bialgebras following the recipe of Roytenberg \cite{roytenberg-1999} and Voronov \cite{Voronov:2001qf} who develop the theory of Lie bialgebroids. A little more specifically, one can define the notion of an  $L_{\infty}$-bialgebra as an $L_{\infty}$-algebra $(\Pi U, Q_{U})$ together with an $L_{\infty}$-algebra on the dual space $(\Pi U^{*}, Q_{U^{*}})$ such that the homological vector field $Q_{U}$ satisfies a Leibnitz rule over the higher Schouten brackets on $C^{\infty}(\Pi U)$.  The notion of a homotopy version of a Lie bialgebra can be traced back to the work of Kravchenko \cite{kravchenko-2006}, in a $\mathds{Z}$-graded setting. It would be very desirable to understand the details of how the original constructions of Kravchenko relate to that suggested here.   We hope to present details elsewhere.
\end{remark}

\section{Simple Examples}\label{simpleexamples}
\noindent Let us present a few simple examples to help clarify Theorem(\ref{theorem 1}). In order to keep this section relatively simple and self-contained we will concentrate on low order structures.  Where appropriate we direct the reader to the original literature for further details.

\begin{example}{Example} \label{ex1}\textbf{The de Rham differential and canonical structures}\\
\noindent Consider the tangent bundle of a manifold $TM$.  The relevant homological vector field is the de Rham differential. The associated brackets are the canonical Schouten--Nijenhuist bracket on $\Pi T^{*}M$ and the canonical Poisson bracket on $T^{*}M$. The de Rham differential is of weight one and the Schouten/Poisson structures are of bi-weight $(-1, 2)$.
\begin{subequations}
\begin{eqnarray}
Q &=& d = dx^{A}\frac{\partial}{\partial x^{A}} \in \Vect(\Pi TM),\\
S &=&  (-1)^{\widetilde{A}} \pi^{A}p_{A} \in C^{\infty}(T^{*}(\Pi T^{*}M)),\\
P &=&  e_{*}^{A} x^{*}_{A}\in C^{\infty}(\Pi T^{*}(T^{*}M)).
\end{eqnarray}
\end{subequations}
 \end{example}

\begin{example}{Example}\label{ex2} \textbf{Lie algebroids}\\
\noindent By  concentrating on  $n=2$ one naturally recovers Lie algebroids.

\begin{subequations}
\begin{eqnarray}
Q &=& \xi^{\alpha}Q_{\alpha}^{A} \frac{\partial}{\partial x^{A}} + \frac{1}{2!} \xi^{\alpha}\xi^{\beta}Q_{\beta \alpha}^{\gamma}\frac{\partial}{\partial \xi^{\gamma}} \in \Vect(\Pi E),\\
S &=& (-1)^{\widetilde{\alpha}}\pi^{\alpha}Q_{\alpha}^{A} p_{A} + (-1)^{\widetilde{\alpha} + \widetilde{\beta}}\frac{1}{2!} \pi^{\alpha}\pi^{\beta}Q_{\beta \alpha}^{\gamma}\eta_{\gamma} \in C^{\infty}(T^{*}(\Pi E^{*})),\\
P &=& e_{*}^{\alpha}Q_{\alpha}^{A} x^{*}_{A} - \frac{1}{2!} e_{*}^{\alpha}e_{*}^{\beta}Q_{\beta \alpha}^{\gamma}e_{\gamma}\in C^{\infty}(\Pi T^{*}(E^{*})).
\end{eqnarray}
\end{subequations}

\noindent Note that for a Lie algebroid the homological vector field $Q \in \Vect(\Pi E)$ is of weight one and that the Schouten and Poisson structures are of bi-weight $(-1,2)$. Naturally taking the base manifold to be a ``point" one recovers Lie algebras. \\
\end{example}

\begin{example}{Example}\label{ex3}\textbf{Lie 3-algebroids}\\
\noindent A  Lie 3-algebroid is an $L_{\infty}$-algebroid $(\Pi E,Q)$ such that the homological vector field is concentrated  in weight from minus one up to  and including weight two.
\begin{eqnarray}
\nonumber Q &=& \left( Q^{A} + \xi^{\alpha}Q_{\alpha}^{A} + \frac{1}{2!} \xi^{\alpha} \xi^{\beta} Q_{\beta \alpha}^{A} \right) \frac{\partial}{\partial x^{A}}\\
&+& \left(  Q^{\delta} +  \xi^{\alpha} Q_{\alpha}^{\delta} + \frac{1}{2!} \xi^{\alpha} \xi^{\beta} Q_{\beta \alpha}^{\delta} + \frac{1}{3!} \xi^{\alpha}\xi^{\beta}\xi^{\gamma}Q_{\gamma \beta \alpha}^{\delta}\right)\frac{\partial}{\partial \xi^{\delta}} \in \Vect(\Pi E).
\end{eqnarray}

\noindent   The associated higher Schouten and higher Poisson structures are given by
\begin{subequations}
\begin{eqnarray}
S &=& \left(Q^{A} + (-1)^{\widetilde{\alpha}} \pi^{\alpha} Q_{\alpha}^{A} + (-1)^{\widetilde{\alpha} + \widetilde{\beta}} \frac{1}{2!} \pi^{\alpha} \pi^{\beta}Q_{\beta \alpha}^{A}  \right)p_{A}\\
\nonumber &+&\left(  Q^{\delta} +(-1)^{\widetilde{\alpha}}\pi^{\alpha} Q_{\alpha}^{\delta} + (-1)^{\widetilde{\alpha} + \widetilde{\beta}}\frac{1}{2!} \pi^{\alpha} \pi^{\beta} Q_{\beta \alpha}^{\delta} \right.\\
\nonumber &+& \left. (-1)^{\widetilde{\alpha}+ \widetilde{\beta} +\widetilde{\gamma}}\frac{1}{3!} \pi^{\alpha}\pi^{\beta}\pi^{\gamma}Q_{\gamma \beta \alpha}^{\delta}\right)\eta_{\delta} \in C^{\infty}(T^{*}(\Pi E^{*})).\\
P &=& \left( Q^{A} + e_{*}^{\alpha} Q_{\alpha}^{A} + \frac{1}{2!} e_{*}^{\alpha} e_{*}^{\beta} Q_{\beta \alpha}^{A} \right) x^{*}_{A}\\
\nonumber &-&\left( Q^{\delta}+ e_{*}^{\alpha} Q_{\alpha}^{\delta} + \frac{1}{2!} e_{*}^{\alpha} e_{*}^{\beta} Q_{\beta \alpha}^{\delta} + \frac{1}{3!} e_{*}^{\alpha}e_{*}^{\beta}e_{*}^{\gamma}Q_{\gamma \beta \alpha}^{\delta}\right)e_{\delta} \in C^{\infty}(\Pi T^{*}(E^{*})).
\end{eqnarray}
\end{subequations}
\noindent Note the higher structures consist of the sum of bi-weight $(1,0)$, $(0,1)$, $(-1,2)$ and $(-2,3)$ terms. Thus, the homotopy Schouten/Poisson algebras consist of 0-aray, 1-aray, 2-aray and 3-aray brackets. Taking the base manifold to be a ``point" one is lead to what is known as a Lie 3-algebra.
\end{example}

\newpage

\begin{example}{Example}\label{ex4}\textbf{Graded 3-Lie algebras}\\
\noindent Let $U$ be a (super)vector space. A graded (or super) 3-Lie algebra is taken to be an $L_{\infty}$-algebra such that the homological vector field $Q \in \Vect(\Pi U)$ is concentrated in weight two.
\begin{equation}
Q = \frac{1}{3!}\xi^{\alpha}\xi^{\beta}\xi^{\gamma}Q_{\gamma \beta \alpha}^{\delta}\frac{\partial}{\partial \xi^{\delta}} \in  \Vect(\Pi U).
\end{equation}

\noindent The associated higher Schouten and higher Poisson structures of bi-weight $(-2,3)$
\begin{subequations}
\begin{eqnarray}
S &=& (-1)^{\widetilde{\alpha}+ \widetilde{\beta} + \widetilde{\gamma}}\frac{1}{3!} \pi^{\alpha}\pi^{\beta}\pi^{\gamma}Q_{\gamma \beta \alpha}^{\delta}\eta_{\delta} \in C^{\infty}(T^{*}(\Pi U^{*})).\\
P &= & - \frac{1}{3!} e^{\alpha}_{*}e^{\beta}_{*}e^{\gamma}_{*}Q_{\gamma \beta \alpha}^{\delta } e_{\delta} \in C^{\infty}(\Pi T^{*}(U^{*})).
\end{eqnarray}
\end{subequations}
\noindent The homotopy Schouten and Poisson algebras consist of a single ternary bracket. Note that this is not the same as a Lie $3$-algebroid over a ``point", i.e. a Lie 3-algebra which  consists of a series of brackets for $n=0,1,2,3$.
\end{example}

\begin{small}
\begin{aside}
 If the $L_{\infty}$-algebra has only a non-vanishing $n$-aray bracket then one  has a \emph{graded $n$-Lie algebra}.  These are not quite the same as Filippov's  $n$-Lie algebras \cite{Filippov:1985} due to the underlying gradings of weight and Grassmann parity.  Such algebras are described by a weight $(n-1)$ homological vector field. The associated higher Schouten/Poisson structures are of bi-weight $(1-n,n)$.  The Bagger--Lambert--Gustavsson model \cite{Bagger:2006sk,Bagger:2007jr,Gustavsson:2007vu} (plus many other references) of multiple coincident M2-branes   is constructed using  (metric) $3$-Lie algebras.   Reformulating the BLG-model and the generalised Nham equation of Basu \& Harvey \cite{Basu:2004ed} in the language of $L_{\infty}$-algebras was undertaken by Iuliu-Lazaroiu et al \cite{IuliuLazaroiu:2009wz}. Also Lambert \& Papageorgakis \cite{Lambert:2010wm} very recently provided evidence that $3$-aray algebras are also fundamental in the effective description of M5 branes. There seems to be some deep link between M-theory and $n$-aray algebras.   Thus, it is natural to wonder if any of  the work presented in this paper is of any relevance here. A little care over the sign factors appearing in the constructions would be required. However, this should be very tractable.
\end{aside}
\end{small}

\begin{example}{Example} \label{ex5}\textbf{Higher Poisson structures on Lie algebroids}\\
\noindent This example is taken  directly from \cite{Bruce-2010},  also see \cite{khudaverdian-2008} for the specific case where $E = TM$. Recall that a Lie algebroid $E \rightarrow M$ is completely encoded in a weight minus one Schouten bracket on $C^{\infty}(\Pi E^{*})$, see Example(\ref{ex2}). Let us denote this Schouten bracket as $\SN{\bullet, \bullet}_{E}$. A higher Poisson structure on the Lie algebroid $E$ is defined as an even parity, but generally inhomogeneous in weight  ``multivector" $\mathcal{P} \in C^{\infty}(\Pi E^{*})$ such that $\SN{\mathcal{P},\mathcal{P}}_{E}=0$. Associated with such a structure is an homotopy Poisson algebra over the base manifold. The vector bundle $E^{*} \rightarrow M$ comes equipped with the structure of an $L_{\infty}$-algebroid viz $Q_{\mathcal{P}} = - \SN{\mathcal{P}, \bullet}_{E} \in \Vect(\Pi E^{*})$.\\

\noindent Then via Theorem(\ref{theorem 1}) we see that
\begin{enumerate}
\item The algebra of ``differential forms" $C^{\infty}(\Pi E)$ comes equipped with an homotopy Schouten algebra.
\item The algebra of ``symmetric tensors" $C^{\infty}(E)$ comes equipped with an homotopy Poisson algebra.
\end{enumerate}
\end{example}

\section{Concluding remarks}\label{concluding remarks}

\noindent We must remark that we have worked in the ``super-setting" and that the (bi-)weight attached to the coordinates and the brackets keep track of the ``algebra" in a geometric way. Although there is no canonical choice of weights, the ones used here seem quite natural as far as the geometry is concerned. For the specific constructions relating to $L_{\infty}$-algebras, it is possible to amend the constructions presented in this work to be inline with the original gradings of Lada \& Stasheff  \cite{Lada:1994mn,Lada:1992wc}.\\

\noindent Voronov \cite{Voronov2010B} recently defined  \emph{non-linear Lie algebroids}\footnote{In fact in Voronov in \cite{Voronov:2001qf} defines \emph{non-linear Lie bialgebroids}. } as homological vector fields of weight one over non-negatively graded manifolds. It is natural to consider inhomogeneous homological vector fields over a non-negatively graded manifold as a further generalisation of $L_{\infty}$-algebroids.      \\

\noindent However, it is not  clear how the constructions presented here would carry over to homological vector fields over  non-negatively  graded manifolds, more general graded manifolds or even derived manifolds. The graded structures associated with  vector and double vector bundle structures feature prominently. The presence of ``linear structures" is essential in dualising, applying the parity reversion functor and quite critically in employing the double vector bundle morphisms.  One other direction is to consider higher vector bundles and ``higher Legendre transformations".  Multigraded manifolds can be employed to set-up the theory of higher vector bundles,  see for example \cite{Grabowski2009,Voronov:2001qf,voronov-2006}. This sits comfortably with the constructions presented here.  The notion of multiple $L_{\infty}$-algebroids would require a good understanding of such structures over multigraded manifolds.  \\

\noindent The relation between $L_{\infty}$-algebroids (as defined here) and the BV and BFV formalisms is not completely transparent. For example, very similar structures can be found when generalising the BV formalism to higher or nonabelian antibrackets \cite{Alfaro:1995vw,Batalin:1998cz,Batalin:1999gf,Bering:1996kw}. A clear and precise understanding of the role of $L_{\infty}$-algebroids  in quantum field theory  would be very desirable. However, generally further weights including non-positive weights would be required for a complete understanding. Thus, geometric structures over derived manifolds are likely to be key in uncovering the relevance of Theorem(\ref{theorem 1})  in quantum field theory \cite{Schreiber2010}.  In any case, it is tempting to think of $L_{\infty}$-algebroid structures (Eqn.(\ref{homologicalfieldLieinfity})) as some kind of  inhomogeneous  higher ghost number BRST operator akin to that found in \cite{Chryssomalakos:1998gh,DeAzcarraga:1996ts}. The physical or mathematical  relevance of such  operators is not immediately obvious,  further study is required.\\

\noindent  At present  higher Poisson and Schouten structures over supermanifolds or on Lie algebroids  provide the most natural non-trivial examples of $L_{\infty}$-algebroids. With hindsight this was to be expected given the intimate relation between Lie algebroids and Poisson geometry. Theorem(\ref{theorem 1}) together with results found in \cite{Bruce-2010,khudaverdian-2008} demonstrate that this relation passes over to their homotopy relatives without substantial effort.  The key to this relative ease is to employ  graded manifolds. \\

\section*{Acknowledgments}

\noindent The author would like to thank   D. Roytenberg, U. Schreiber and J. Stasheff for their invaluable insight. The author must also  thank R. Szabo for pointing out references  to M-theory.  A special thank you must go to the anonymous referee whose comments and suggestions served to vastly  improve the presentation of  this work. \\

\appendix
\section*{Appendix}\label{appendix}
\section{Canonical double vector bundle morphisms}
\noindent For completeness we present the canonical double vector bundle morphisms used in the proof of the main theorem. In particular we prove that the morphisms are symplectomorphisms.  Further details and application of these morphisms can be found in \cite{Bruce-2010,roytenberg-1999,Voronov:2001qf}.

\subsection{$T^{*}(\Pi E^{*})$ and $T^{*}(\Pi E)$}\label{A1}
\noindent Let us employ natural local coordinates:

\vspace{15pt}
\begin{tabular}{|l ||l| }
\hline
$ T^{*}(\Pi E^{*})$ & $\{x^{A},\eta_{\alpha}, p_{A}, \pi^{\alpha} \}$ \\
$  T^{*}(\Pi E)$  & $\{x^{A},\xi^{\alpha}, p_{A}, \pi_{\alpha} \}$\\
\hline
\end{tabular}\\

\noindent The parities are given by $ \widetilde{x}^{A}= \widetilde{p}_{A}= \widetilde{A}$, $\widetilde{\eta}_{\alpha}= \widetilde{\pi}^{\alpha}= \widetilde{\pi}_{\alpha}= \widetilde{\xi}^{\alpha} =  \widetilde{\alpha}+1$. The bi-weights are assigned as $\w(x^{A}) =(0,0)$, $\w(\eta_{\alpha}) =(1,0)$, $\w(p_{A}) =(0,1)$, $\w(\pi^{\alpha}) =(-1,1)$ , $\w(\xi^{\alpha}) =(-1,1)$, $\w(\pi_{\alpha}) =(1,0)$.\\

\noindent The admissible changes of coordinates are:

\vspace{15pt}
\begin{tabular}{|l||l|}
\hline
$T^{*}(\Pi E^{*})$ &   $\overline{x}^{A}  =  \overline{x}^{A}(x)$, \hspace{5pt} $\overline{\eta}_{\alpha}  =   (T^{-1})_{\alpha}^{\:\: \beta}\eta_{\beta}$,\\
 & $\overline{p}_{A} = \left( \frac{\partial x^{B}}{\partial \overline{x}^{A}} \right)p_{B} + (-1)^{\widetilde{A}(\widetilde{\gamma}+ 1) + \widetilde{\delta}} \pi^{\delta}T_{\delta}^{\:\: \gamma} \left( \frac{\partial (T^{-1})_{\gamma}^{\:\: \alpha}}{\partial \overline{x}^{A}} \right)\eta_{\alpha}$,\\
 & $ \overline{\pi}^{\alpha} = (-1)^{\widetilde{\alpha} + \widetilde{\beta}}\pi^{\beta}T_{\beta}^{\:\: \alpha}$.\\
\hline
$T^{*}(\Pi E)$ & $ \overline{x}^{A}  =  \overline{x}^{A}(x)$, \hspace{5pt}$\overline{\xi}^{\alpha}  =   \xi^{\beta} T_{\beta}^{\:\: \alpha}$,\\
& $\overline{p}_{A} = \left( \frac{\partial x^{B}}{\partial \overline{x}^{A}} \right)p_{B} + (-1)^{\widetilde{A}(\widetilde{\gamma}+1)} \xi^{\delta}T_{\delta}^{\:\: \gamma} \left(\frac{\partial (T^{-1})_{\gamma}^{\:\: \alpha}}{\partial \overline{x}^{A}}  \right)\pi_{\alpha}$,\\
&  $\overline{\pi}_{\alpha} = (T^{-1})_{\alpha}^{\:\: \beta} \pi_{\beta}$.\\
\hline
\end{tabular}\\

\vspace{15pt}

\noindent There is canonical double vector bundle morphism $R: T^{*}(\Pi E^{*}) \rightarrow T^{*}(\Pi E )$ given in local coordinates as

\begin{equation}
R^{*}(\pi_{\alpha}) = \eta_{\alpha},  \hspace{35pt} R^{*}(\xi^{\alpha}) = (-1)^{\widetilde{\alpha}}\pi^{\alpha}.
\end{equation}

\begin{lemma}{Lemma}\label{lemma1}
The canonical double vector bundle morphism $R: T^{*}(\Pi E^{*}) \rightarrow T^{*}(\Pi E )$ is a symplectomorphism.
\end{lemma}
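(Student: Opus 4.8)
The plan is to exhibit explicitly the canonical symplectic (Liouville) one-forms on both cotangent bundles and check that $R^*$ pulls one back to the other; equality of the Liouville forms immediately gives equality of the symplectic two-forms, which is the assertion. On $T^*(\Pi E)$ with coordinates $\{x^A,\xi^\alpha,p_A,\pi_\alpha\}$ the canonical one-form is $\theta_{\Pi E} = dx^A\, p_A + d\xi^\alpha\, \pi_\alpha$ (with the sign/ordering conventions fixed so that $\omega = d\theta$ is the canonical symplectic form, being careful that $\xi^\alpha$ and $\pi_\alpha$ are odd). On $T^*(\Pi E^*)$ with coordinates $\{x^A,\eta_\alpha,p_A,\pi^\alpha\}$ the canonical one-form is $\theta_{\Pi E^*} = dx^A\, p_A + d\eta_\alpha\, \pi^\alpha$. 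First I would write down these two tautological one-forms carefully, tracking the Grassmann signs that appear because the fibre coordinates $\eta_\alpha,\pi^\alpha,\pi_\alpha,\xi^\alpha$ are all odd.

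Next I would compute $R^*\theta_{\Pi E}$ using the defining relations $R^*(\pi_\alpha) = \eta_\alpha$, $R^*(\xi^\alpha) = (-1)^{\widetilde\alpha}\pi^\alpha$, together with $R^*(x^A) = x^A$ and $R^*(p_A) = p_A$ (the base and the $\partial/\partial x^A$-momentum are untouched, as one reads off from the fact that $R$ is built over the identity on $M$ and is a double vector bundle morphism). The only nontrivial term is $R^*(d\xi^\alpha\,\pi_\alpha) = d\big((-1)^{\widetilde\alpha}\pi^\alpha\big)\,\eta_\alpha = (-1)^{\widetilde\alpha} d\pi^\alpha\,\eta_\alpha$. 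I then need to rewrite this as $d\eta_\alpha\,\pi^\alpha$ up to an exact form: since $d(\pi^\alpha\eta_\alpha) = d\pi^\alpha\,\eta_\alpha + (-1)^{\widetilde{\pi^\alpha}}\pi^\alpha\,d\eta_\alpha$ and the parities of $\pi^\alpha$ and $\eta_\alpha$ are both $\widetilde\alpha+1$, this lets me trade $(-1)^{\widetilde\alpha}d\pi^\alpha\,\eta_\alpha$ for $d\eta_\alpha\,\pi^\alpha$ modulo $d(\text{something})$, with the sign factor $(-1)^{\widetilde\alpha}$ precisely absorbing the reordering sign. Hence $R^*\theta_{\Pi E} = \theta_{\Pi E^*} + d(\cdot)$, and applying $d$ gives $R^*\omega_{\Pi E} = \omega_{\Pi E^*}$, which is the claim. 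Alternatively, and perhaps more cleanly for a write-up, I would verify directly that the nonzero canonical Poisson brackets are preserved: on $T^*(\Pi E)$ one has $\{x^A,p_B\} = \delta^A_B$ and $\{\xi^\alpha,\pi_\beta\} = \delta^\alpha_\beta$ (up to the standard sign convention), and on $T^*(\Pi E^*)$ one has $\{x^A,p_B\} = \delta^A_B$ and $\{\eta_\alpha,\pi^\beta\} = \pm\delta_\alpha^\beta$; then checking $R^*$ intertwines these is a one-line computation once the conventions are fixed, the only subtlety again being the sign $(-1)^{\widetilde\alpha}$ attached to $\xi^\alpha\mapsto\pi^\alpha$ and whether $\{\eta_\alpha,\pi^\beta\}$ carries an extra sign relative to $\{\xi^\alpha,\pi_\beta\}$.

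The one genuinely delicate point — the main obstacle — is bookkeeping of Grassmann signs: one must pin down once and for all (i) the ordering convention in the Liouville form (whether it is $dq\,p$ or $p\,dq$, and where the odd coordinates sit), (ii) the induced sign in the canonical bracket between an odd coordinate and its conjugate momentum, and (iii) the sign rule $d(ab) = (da)b + (-1)^{\widetilde a}a\,db$ for the de Rham differential on the graded manifold. The factor $(-1)^{\widetilde\alpha}$ in $R^*(\xi^\alpha) = (-1)^{\widetilde\alpha}\pi^\alpha$ is chosen exactly so these signs cancel, and the proof is really just the verification that they do; none of this is conceptually hard, but it is where all the care is needed. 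I would present the Liouville-form computation as the primary argument and remark that it simultaneously shows $R^*$ respects the weights (since $\w(\pi_\alpha) = \w(\eta_\alpha) = (1,0)$ and $\w(\xi^\alpha) = \w(\pi^\alpha) = (-1,1)$ by the assignments in Appendix~\ref{A1}), so $R$ is a morphism of graded symplectic manifolds. The analogous statement for the odd case, Lemma~\ref{lemma2}, would then follow by the same argument with the canonical symplectic structure on $\Pi T^*$ replaced by its odd counterpart.
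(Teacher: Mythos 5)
Your proposal is correct and follows essentially the same route as the paper: the paper's proof simply writes the canonical symplectic forms in the given coordinates, $\omega_{T^{*}(\Pi E)} = dp_{A}dx^{A} + d\pi_{\alpha}d\xi^{\alpha}$ and $\omega_{T^{*}(\Pi E^{*})} = dp_{A}dx^{A} + d\pi^{\alpha}d\eta_{\alpha}$, and checks directly that $R^{*}\omega_{T^{*}(\Pi E)} = \omega_{T^{*}(\Pi E^{*})}$, the sign $(-1)^{\widetilde{\alpha}}$ in $R^{*}(\xi^{\alpha})$ cancelling the reordering sign exactly as you describe. Passing through the Liouville one-form rather than the two-form is only a cosmetic variation on the same coordinate verification.
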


\begin{proof}
The canonical even  symplectic structure on $T^{*}(\Pi E^{*})$ is given by $\omega_{T^{*}(\Pi E^{*})} =  dp_{A}dx^{A} + d\pi^{\alpha}d \eta_{\alpha}$ and on $T^{*}(\Pi E)$ is given by $\omega_{T^{*}(\Pi E)} = dp_{A}dx^{A} + d\pi_{\alpha}d\xi^{\alpha}$. Thus, $R^{*}\omega_{T^{*}(\Pi E )} = \omega_{T^{*}(\Pi E^{*})}$ and we see that $R$ is indeed a symplectomorphism.\\
\end{proof}

\subsection{$\Pi T^{*}(E^{*})$ and $\Pi T^{*}(\Pi E)$}\label{A2}
\noindent Let us employ natural local coordinates:

\vspace{15pt}
\begin{tabular}{|l ||l| }
\hline
$\Pi T^{*}(E^{*})$ & $\{x^{A},e_{\alpha}, x^{*}_{A}, e_{*}^{\alpha} \}$\\
$ \Pi T^{*}(\Pi E)$  & $\{x^{A},\xi^{\alpha}, x^{*}_{A}, \xi_{\alpha}^{*} \}$\\
\hline
\end{tabular}\\

\noindent The parities are given by  $\widetilde{x}^{A} = \widetilde{A}$, $\widetilde{e}_{\alpha}= \xi^{*}_{\alpha}= \widetilde{\alpha}$, $\widetilde{x}^{*}_{A} = \widetilde{A}+1$, $\widetilde{\xi}^{\alpha} = \widetilde{e}_{*}^{\alpha}= \widetilde{\alpha}+1$.  The bi-weights are assigned as $\w(x^{A}) =(0,0)$, $\w(e_{\alpha}) =(1,0)$, $\w(x^{*}_{A}) =(0,1)$, $\w(e_{*}^{\alpha}) =(-1,1)$, $\w(\xi^{\alpha}) =(-1,1)$, $\w(\xi_{\alpha}^{*}) =(1,0)$.\\

\noindent The admissible changes of coordinates are:

\vspace{15pt}
\begin{tabular}{|l||l|}
\hline
$\Pi T^{*}( E^{*})$ &   $\overline{x}^{A}  =  \overline{x}^{A}(x)$, \hspace{5pt} $\overline{e}_{\alpha}  =   (T^{-1})_{\alpha}^{\:\: \beta}e_{\beta}$,\\
 & $\overline{x}^{*}_{A} = \left( \frac{\partial x^{B}}{\partial \overline{x}^{A}} \right)x^{*}_{B} - (-1)^{\widetilde{A}(\widetilde{\gamma}+ 1) + \widetilde{\delta}} e_{*}^{\delta}T_{\delta}^{\:\: \gamma} \left( \frac{\partial (T^{-1})_{\gamma}^{\:\: \alpha}}{\partial \overline{x}^{A}} \right)e_{\alpha}$,\\
 & $ \overline{e}_{*}^{\alpha} = e_{*}^{\beta}T_{\beta}^{\:\: \alpha}$.\\
\hline
$\Pi T^{*}(\Pi E)$ & $ \overline{x}^{A}  =  \overline{x}^{A}(x)$, \hspace{5pt}$\overline{\xi}^{\alpha}  =   \xi^{\beta} T_{\beta}^{\:\: \alpha}$,\\
& $\overline{x}^{*}_{A} = \left( \frac{\partial x^{B}}{\partial \overline{x}^{A}} \right)x^{*}_{B} + (-1)^{\widetilde{A}(\widetilde{\gamma}+1)} \xi^{\delta}T_{\delta}^{\:\: \gamma} \left(\frac{\partial (T^{-1})_{\gamma}^{\:\: \alpha}}{\partial \overline{x}^{A}}  \right)\xi^{*}_{\alpha}$,\\
&  $\overline{\xi}^{*}_{\alpha} = (T^{-1})_{\alpha}^{\:\: \beta} \xi^{*}_{\beta}$.\\
\hline
\end{tabular}\\
\vspace{15pt}

\noindent There is a  canonical double vector bundle morphism $R:\Pi T^{*}(E^{*}) \rightarrow \Pi T^{*}(\Pi E)$  given in local coordinates as

\begin{equation}
R^{*}(\xi^{\alpha}) = e_{*}^{\alpha}, \hspace{30pt} R^{*}(\xi_{\alpha}^{*}) = -e_{\alpha}.
\end{equation}

\begin{lemma}{Lemma}\label{lemma2}
The canonical double vector bundle morphism $R:\Pi T^{*}(E^{*}) \rightarrow \Pi T^{*}(\Pi E)$ is an odd symplectomorphism.
\end{lemma}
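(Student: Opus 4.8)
The plan is to mimic the proof of Lemma(\ref{lemma1}) exactly, working now in the odd-symplectic category. First I would write down the canonical odd (Schouten--Nijenhuis) symplectic structures on the two parity-reversed cotangent bundles in the natural Darboux coordinates. On $\Pi T^{*}(E^{*})$ this is $\omega_{\Pi T^{*}(E^{*})} = dx^{*}_{A}\,dx^{A} + de_{*}^{\alpha}\,de_{\alpha}$, and on $\Pi T^{*}(\Pi E)$ it is $\omega_{\Pi T^{*}(\Pi E)} = dx^{*}_{A}\,dx^{A} + d\xi^{*}_{\alpha}\,d\xi^{\alpha}$; in both cases this form is odd because the conjugate pair $(x,x^{*})$ and $(e,e_{*})$ (resp.\ $(\xi,\xi^{*})$) have opposite Grassmann parities, so $\omega$ has total parity $1$. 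I would remark that these are the standard canonical structures inducing the Schouten--Nijenhuis bracket on functions, consistent with the coordinate conventions and bi-weights fixed above in Section(\ref{A2}).

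Next I would simply pull back $\omega_{\Pi T^{*}(\Pi E)}$ along $R$ using $R^{*}(\xi^{\alpha}) = e_{*}^{\alpha}$, $R^{*}(\xi^{*}_{\alpha}) = -e_{\alpha}$, together with $R^{*}(x^{A}) = x^{A}$ and $R^{*}(x^{*}_{A}) = x^{*}_{A}$ (the base coordinates being untouched by a double vector bundle morphism over $M$). Then $R^{*}(d\xi^{*}_{\alpha}\,d\xi^{\alpha}) = d(-e_{\alpha})\,d(e_{*}^{\alpha}) = -\,de_{\alpha}\,de_{*}^{\alpha} = +\,de_{*}^{\alpha}\,de_{\alpha}$, where in the last step one uses the graded-symmetry rule for the wedge-like product of two one-forms of opposite parity (the relevant sign being exactly what makes $de\,de_{*}$ and $de_{*}\,de$ differ, and which cancels the explicit minus sign in $R^{*}(\xi^{*}_{\alpha})$). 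Hence $R^{*}\omega_{\Pi T^{*}(\Pi E)} = dx^{*}_{A}\,dx^{A} + de_{*}^{\alpha}\,de_{\alpha} = \omega_{\Pi T^{*}(E^{*})}$, and since $R$ is clearly an (invertible) diffeomorphism of graded manifolds by construction, it is an odd symplectomorphism.

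The only genuinely delicate point — and the one I would be most careful about — is bookkeeping the Grassmann signs in the product of differentials: one must check that the minus sign built into $R^{*}(\xi^{*}_{\alpha}) = -e_{\alpha}$ is precisely the sign needed to compensate the reordering sign $d\xi^{*}_{\alpha}\,d\xi^{\alpha} \mapsto de_{*}^{\alpha}\,de_{\alpha}$, given that $e_{\alpha}$ and $e_{*}^{\alpha}$ have opposite parities and hence their one-forms anticommute or commute according to the super-convention in force. (This is exactly the reason the $T^{*}(\Pi E^{*})$ case in Lemma(\ref{lemma1}) carried the $(-1)^{\widetilde{\alpha}}$ in $R^{*}(\xi^{\alpha})$ rather than a bare sign — here the parity shift is absorbed differently, producing the uniform $-1$.) Once that sign is verified, everything else is a one-line pullback computation; there is no deeper obstacle, and in particular the weight-compatibility was already guaranteed by the bi-weight assignments recorded above. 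I would close by noting, as in the remark following Theorem(\ref{theorem 1}), that it is this symplectomorphism property that transports $\SN{\varsigma Q,\varsigma Q} = 0$ into $\SN{P,P}_{\Pi T^{*}(E^{*})} = 0$.
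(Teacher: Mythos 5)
Your proposal follows exactly the same route as the paper: write both canonical odd symplectic structures in Darboux coordinates, pull back along $R$, and observe that the forms agree. The one place where you diverge is the sign bookkeeping that you yourself flag as the delicate point. The paper's canonical odd symplectic forms carry explicit parity-dependent prefactors, $\omega_{\Pi T^{*}(E^{*})} = (-1)^{\widetilde{A}+1}\, dx^{*}_{A}dx^{A} + (-1)^{\widetilde{\alpha}+1}\, de_{*}^{\alpha}de_{\alpha}$ and $\omega_{\Pi T^{*}(\Pi E)} = (-1)^{\widetilde{A}+1}\, dx^{*}_{A}dx^{A} + (-1)^{\widetilde{\alpha}}\, d\xi^{*}_{\alpha}d\xi^{\alpha}$, which you drop. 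With those prefactors in place the differentials $de_{\alpha}$ and $de_{*}^{\alpha}$ in fact \emph{commute} rather than anticommute: their parities are $\widetilde{\alpha}+1$ and $\widetilde{\alpha}$ respectively, and the reordering sign $(-1)^{(\widetilde{\alpha}+1)\widetilde{\alpha}}$ is always $+1$. So the explicit minus in $R^{*}(\xi^{*}_{\alpha}) = -e_{\alpha}$ is absorbed not by anticommuting the differentials, as you assert, but by the relative factor of $(-1)$ between the prefactors $(-1)^{\widetilde{\alpha}+1}$ and $(-1)^{\widetilde{\alpha}}$ on the two sides. As written, your bare-prefactor forms together with the (incorrect, in this convention) anticommutation claim happen to land on the right answer by two compensating slips; to make the argument watertight you should either adopt the paper's normalisation of the canonical odd forms or justify a different convention for the parity of the differentials. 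Everything else, including the closing remark that the symplectomorphism property is what transports $\SN{\varsigma Q,\varsigma Q}=0$ into $\SN{P,P}_{\Pi T^{*}(E^{*})}=0$, matches the paper's intent.
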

\begin{proof}
The canonical odd symplectic structures are given by $\omega_{\Pi T^{*}(E^{*})} = (-1)^{\widetilde{A}+1} dx^{*}_{A}dx^{A} + (-1)^{\widetilde{\alpha}+1} de_{*}^{\alpha}de_{\alpha}$ and $\omega_{\Pi T^{*}(\Pi E)} = (-1)^{\widetilde{A}+1} dx^{*}_{A}dx^{A} + (-1)^{\alpha} d \xi_{\alpha}^{*}d \xi^{\alpha}$. Thus $R^{*}\omega_{\Pi T^{*}(\Pi E )} = \omega_{T^{*}(E^{*})}$ and we see that $R$ is indeed an odd symplectomorphism.\\
\end{proof}


\end{document}